\long\def\comment#1{\ifdim\overfullrule>0pt{\sf[{#1}]}\fi}
\newtheorem{theorem}{Theorem}
\newtheorem{definition}{Definition}
\newtheorem{claim}{Claim} \newtheorem{lemma}{Lemma}
\newcommand{\namedref}[2]{\hyperref[#2]{#1~\ref*{#2}}}
\newcommand{\alert}[1]{\textbf{\color{red}
[[[#1]]]}\marginpar{\textbf{\color{red}**}}\typeout{ALERT:
\the\inputlineno: #1}}
\begin{document} \bibliographystyle{alpha}
\def\proofend{\hfill$\Box$\medskip}
\def\Proof{\noindent{\bf Proof:\ \ }}
\def\Sketch{\noindent{\bf Sketch:\ \ }}
\def\eps{\epsilon}

\title{The Alternating Stock Size Problem and the Gasoline
  Puzzle}

\author{Alantha Newman\thanks{
CNRS-Universit\'e Grenoble Alpes and G-SCOP.
Supported in part by LabEx PERSYVAL-Lab (ANR--11-LABX-0025).
 {{\tt alantha.newman@grenoble-inp.fr}}} \and Heiko
  R\"oglin\thanks{Universit\"at Bonn. Supported by ERC Starting Grant
    306465 (BeyondWorstCase). {\tt roeglin@cs.uni-bonn.de}} \and
  Johanna Seif\thanks{Ecole Normale Sup\'erieure de Lyon. {\tt johanna.seif@ens-lyon.fr}} }

\maketitle

\begin{abstract}
Given a set $S$ of integers whose sum is zero, consider the
problem of finding a permutation of these integers such that
(i) all prefix sums of the ordering are nonnegative and
(ii) the maximum value of a prefix sum is minimized.  
Kellerer et al.\ referred to this problem as the {\em stock size problem} and
showed that it can be approximated to within 3/2.  They also
showed that an approximation ratio of 2 can be achieved via several
simple algorithms.

We consider a related problem, which we call the {\em alternating stock
size problem}, where the numbers of positive and negative integers in
the input set $S$ are equal.  The problem is the same as above, but we
are additionally required to alternate the positive and negative
numbers in the output ordering.  This problem also has several simple
2-approximations.  We show that it can be approximated to within 1.79.

Then we show that this problem is closely related to an optimization
version of the gasoline puzzle due to Lov\'asz, in which we want to
minimize the size of the gas tank necessary to go around the track.
We present a 2-approximation for this problem, using a
natural linear programming relaxation whose
feasible solutions are doubly stochastic matrices.  
Our novel rounding algorithm is based on a transformation that
yields another doubly stochastic matrix with special properties, 
from which we can extract a suitable permutation.



\end{abstract}


\section{Introduction}

Suppose there is a set of jobs that can be processed in any order.
Each job requires a specified amount of a particular resource,
such as gasoline, which can be supplied in an amount chosen from a
specified set of quantities.  The limitation is that the storage space
for this resource is bounded, so it must be replenished as it is used.
The goal is to order the jobs and the replenishment amounts so that
the required quantity of the resource is always available for the job
being processed and so that the storage space is never exceeded.

More formally, we are given a set of integers $Z = \{z_1, z_2, \dots
z_n\}$ whose sum is zero.  For a permutation $\sigma$, a prefix sum is
$\sum_{i=1}^t z_{\sigma(i)}$ for $t \in [1,n]$.  Our goal is to find a
permutation of the elements in $Z$ such that (i) each prefix sum is
nonnegative and (ii) the maximum prefix sum is minimized.  (Placing
the elements with positive values in front of the elements with
negative values satisfies (i) and therefore yields a
feasible---although possibly far from optimal---solution.)  This
problem is known as the {\em stock size problem}.  Kellerer, Kotov,
Rendl and Woeginger presented a simple algorithm with a guarantee of
$\mu_x + \mu_y$, where $\mu_x$ is the largest number in $Z$, and
$\mu_y$ is the absolute value of the negative number with the largest
absolute value in $Z$.  (We sometimes use $\mu = \max\{\mu_x,
\mu_y\}$.)  Since both $\mu_x$ and $\mu_y$ are lower bounds on the
value~$S^*$ of an optimal solution, this shows that the problem can be
approximated to within a factor of 2.  Additionally, they presented
algorithms with approximation guarantees of $8/5$ and
$3/2$~\cite{kellerer1998stock}.

\subsection{The Alternating Stock Size
  Problem}\label{subsec:AlternatingSSProblem}

In this paper, we first consider a restricted version of the stock size
problem in which we require that the positive and negative numbers in
the output permutation alternate.  We call this problem the
{\em alternating stock size problem}.  A motivation for this problem
is that it would allow for task scheduling in advance of knowing the
input data.  For example, suppose we want to stock and remove items
from a warehouse and each task will occupy a time slot.
If we
want to plan ahead, we may want to designate each slot as a stocking
or a removing slot in advance--for example, all odd time (night) slots will be used
for stocking and all even time (day) slots for destocking.  This could be
beneficial in situations where some preparation is required for each
type of time slot.

The input for our new problem is two sets of positive integers, $X =
\{x_1 \geq \dots \geq x_n\}$ and $Y= \{y_1 \geq \dots \geq y_n\}$,
such that $|X|=|Y|$, and the two sets have equal sums.  The elements
of $X$ represent the elements to be ``added'' and the elements of $Y$
are those to be ``removed''.  Note that here, $\mu_y = y_1$
and $\mu_x = x_1$.  We now formally define the new problem.
\begin{definition}
The goal of the {\em{alternating stock size problem}} is to find permutations
$\sigma$ and $\nu$ such that
\begin{enumerate}

\item[(i)] for $t \in [1,n]$, $\sum_{i=1}^t x_{\sigma(i)} - y_{\nu(i)}
  \geq 0$,

\item[(ii)] $\max\limits_{1 \leq t \leq n} \sum_{i=1}^t (x_{\sigma(i)}
  - y_{\nu(i-1)})$
is minimized, where $y_{\nu(0)} = 0$.

\end{enumerate}
\end{definition}

Although this problem is a variant of the stock size problem, the
algorithms found in \cite{kellerer1998stock} do not provide
approximation guarantees, since they do not necessarily produce
feasible solutions for the alternating problem.  Indeed, even the
optimal solutions for these two problems on the same instance can
differ greatly.  The following example illustrates this:
\begin{eqnarray*}
X & = & \{\underbrace{p-1, \dots, p-1}_{p \text{ entries }},2,
\underbrace{1, \dots, 1}_{p(p-1) \text{ entries }}\},\\
Y & = & \{\underbrace{p, \dots, p}_{p-1 \text{ entries
}},\underbrace{1, 1, 1, \dots, 1}_{p(p-1)+2 \text{ entries
}}\}.
\end{eqnarray*}
For this instance, the optimal value for the alternating problem is at
least $2p-3$, while it is $p$ for the original stock size problem.
Thus, this example exhibits a gap arbitrarily close to 2 between the
optimal solutions for the two problems.

We can show the following facts about the alternating problem.  First, 
there is always a feasible solution.  Second, the problem is NP-hard (as
is the stock size problem).  And third, it is still the case that $2\mu$ is
an upper bound on the value of an optimal solution.  Our main result
for this problem is to give an algorithm with an approximation
guarantee of $1.79$ in Section \ref{sec:alternating}.

\subsection{The Gasoline Problem}\label{subsec:GasolineProblem}

The following well-known puzzle appears on page 31 in \cite{lovasz}:

\begin{quote}
Along a speed track there are some gas stations.  The total amount of
gasoline available in them is equal to what our car (which has a very
large tank) needs for going around the track.  Prove that there is a
gas station such that if we start there with an empty tank, we shall
be able to go around the track without running out of gasoline.
\end{quote}

Suppose that the capacity of each gas station is represented by a
positive integer and the distance of each road segment is represented
by a negative integer.  For simplicity, suppose that it takes one unit
of gas to travel one unit of road.  Then the assumption of the puzzle
implies that the sum of the positive integers equals the absolute
value of the sum of the negative integers.  In fact, if we are allowed
to permute the gas stations and the road segments (placing exactly one
gas station between every pair of consecutive road segments), and our
goal is to minimize the size of the gas tank required to go around the
track (beginning from a feasible starting point), then this is
exactly the alternating stock size problem.

This leads to the following natural problem: Suppose the road segments
are fixed and we are only allowed to rearrange (i.e.\ permute) the gas
stations.  In other words, between each pair of consecutive road
segments (represented by negative integers), there is a spot for
exactly one gas station (represented by positive integers, the
capacities), and we can choose which gas station to place in each
spot.  The goal is to minimize the size of the tank required to get
around the track, assuming we can choose our starting gas station.
What is the complexity of this problem?

We argue in Appendix \ref{sec:np} that this problem is NP-hard.  Our
algorithm for the alternating stock size problem specifically requires
that there is flexibility in placing both the $x$-values and the
$y$-values.  Therefore, it does not appear to be applicable to this
problem, where the $y$-values are pre-assigned to fixed positions.
Let us now formally define the {\em gasoline
  problem}, which is the second problem we will consider in this paper.

As input, we are given the two sets of positive integers $X = \{x_1 \geq
x_2 \geq \dots \geq x_n\}$ and $Y = \{y_1, y_2, \dots, y_n\}$,
where the $y_i$'s are fixed in the given order and $\sum_{i=1}^n x_i =
\sum_{i=1}^n y_i$.  Our goal is to find a permutation $\pi$ that
minimizes the value of $\eta$:
\begin{eqnarray}
\forall [k,\ell]: \quad \Bigg|\sum_{i\in[k,\ell]} x_{\pi(i)} -
\sum_{i\in[k,\ell-1]}y_{i}\Bigg| & \leq & \eta.\label{main_objective}
\end{eqnarray}
Given a circle with $n$ points labeled $1$ through $n$, the interval
$[k,\ell]$ denotes a consecutive subset of integers assigned to points
$k$ through $\ell$.  For example, $[5,8] = \{5,6,7,8\}$, and $[n-1,3]
= \{n-1,n,1,2,3\}$.  We will often use $\mu_x$ to refer to $x_1$,
i.e.\ the maximum $x$-value, which is a lower bound on the optimal
value of a solution.

Observe that in~\eqref{main_objective} we consider only intervals that
contain one more $x$-value than $y$-value.  One might argue that, in
order to model our problem correctly, one also has to look at
intervals that contain one more $y$-value than $x$-value. However,
let~$I$ be such an interval and let~$I'=[1,n]\setminus I$. Then the
absolute value of the difference of the $x$-values and the $y$-values
is the same in~$I$ and~$I'$ (with inverted signs) due to the
assumption~$\sum_{i=1}^n x_i = \sum_{i=1}^n y_i$.

We can also write the constraint \eqref{main_objective} as:
\begin{eqnarray}
\forall k:  \sum_{i\in[1,k]} x_{\pi(i)} -
\sum_{i\in[1,k-1]}y_{i} & \leq & \beta, \label{main_obj_beta}\\
\forall k: \sum_{i\in[1,k]} x_{\pi(i)} - \sum_{i\in[1,k]}y_{i} & \geq &
\alpha, \label{main_obj_alpha}
\end{eqnarray}
where $\alpha \leq 0$, $\beta \geq 0$ and $\eta = \beta-\alpha$.  This
version is slightly more general since it encompasses the scenario
where we would like to minimize $\beta$ for some fixed value of
$\alpha$.  (With these constraints, it is no longer required that the
sum of the $x_i$'s equals the sum of the $y_i$'s.)

What is the approximability of this problem?  Getting a constant
factor approximation appears to be a challenge since the following
example shows that it is no longer the case that $2\mu$ is an upper
bound.  Despite this, we show in Section \ref{section:GasolineProblem}
that there is in fact a 2-approximation algorithm for the gasoline
problem.

\paragraph{Example showing unbounded gap between $OPT$ and
  $\mu$.}

Suppose $X$ and $Y$ each have the following $n$ entries:
\begin{eqnarray*}
X  =  \{\underbrace{1, 1, \dots, 1, 1, 1, \dots, 1}_{n \text{ entries
}}\}, \quad
Y  =  \{\underbrace{2, 2, \dots, 2}_{\frac{n}{2} \text{ entries }}, \underbrace{0,0,
\dots, 0}_{\frac{n}{2} \text{ entries }}\}.
\end{eqnarray*}
In the preceding example, $\mu = 2$.  However, the optimal value is
$n/2$.

\subsection{Generalizations of the Gasoline Problem}
\label{subsec:SlatedSSProblem}

The requirement that the $x$- and $y$-jobs alternate may seem to be
somewhat artificial or restrictive.  A natural generalization of the
gasoline problem (which we will refer to as the {\em generalized
  gasoline problem}) is where the $y$-jobs are assigned to a set of
predetermined positions, which are not necessarily alternating.  As in
the gasoline problem, our goal is to assign the $x$-jobs to the
remaining slots so as to minimize the difference between the maximum
and the minimum prefix.  There is a simple reduction from this
seemingly more general problem to the gasoline problem.  Let $X=
\{x_1 \geq x_2 \geq \dots \geq x_{n_x}\}$ and $Y = \{y_1, y_2, \dots,
y_{n_y}\}$ be the input, where the $y$-jobs are assigned to $n_y$
(arbitrary) slots.  The remaining $n_x$ slots are for the $x$-jobs.
To reduce to an instance of the gasoline problem (with alternation),
we do the following.  For each set of $y$-jobs assigned to adjacent
slots, we add them up to form a single job in a single slot.  For each
pair of consecutive $x$-slots, we place a new $y$-slot between them
where the assigned $y$-job has value zero.  Thus, we obtain an
instance of the gasoline problem as originally defined in the
beginning of this section.

Our new algorithm, developed in Section \ref{section:GasolineProblem}
to solve the gasoline problem, can also be applied to a natural
generalization of the alternating stock size problem, in which we
relax the required alternation between the $x$- and $y$-jobs and
consider a scenario in which each slot is labeled as an $x$- or a
$y$-slot and can only accomodate a job of the designated type.  In
other words, in the solution, the $x$-jobs and $y$-jobs will follow
some specified pattern that is not necessarily alternating.  The goal
is to find a feasible assignment of $x$- and $y$-jobs to $x$- and
$y$-slots, respectively, that minimizes the difference between the
prefixes with highest and lowest values.  Since this is simply a
generalization of the stock size problem with the additional condition
that each slot is {\em slated} as an $x$- or a $y$-slot, we refer to
this problem as the {\em slated stock size problem}.

Formally, we are given two sets of positive integers $X = \{x_1 \geq
x_2 \geq \dots \geq x_{n_x}\}$ and $Y = \{y_1 \geq y_2 \geq \dots \geq
y_{n_y}\}$, and $n = n_x + n_y$ slots, each designated as either an
$x$-slot or a $y$-slot.  Let $I_x$ and $I_y$ denote the indices of the
$x$- and $y$-slots, respectively, and let $P$ denote a prefix.  Then,
the objective is to find a permutation $\pi$ that minimizes the value
of $\beta-\alpha$, where
\begin{eqnarray}
\forall P: \quad \alpha ~ \leq \sum_{i\in P\cap I_x} x_{\pi(i)} - \sum_{i\in P\cap I_y}y_{\pi(i)} ~\leq ~\beta.\label{main_obj_alpha2}
\end{eqnarray}
For this problem, we obtain an algorithm that computes a solution with
value at most $OPT + \mu_x + \mu_y \leq 3\ OPT$.

\subsection{Related Work}

The work most related to the alternating stock size problem is
contained in the aforementioned paper by Kellerer et
al.~\cite{kellerer1998stock}.  Earlier, Abdel-Wahab and Kameda studied
a variant of the stock size problem in which the output sequence of
the jobs is required to obey a given set of precedence constraints,
but the stock size is also allowed to be negative.  They gave a
polynomial-time algorithm for the case when the precedence constraints
are series parallel~\cite{abdel1978scheduling}.  The gasoline problem
and its generalization are related to those found in a widely-studied
research area known as resource constrained scheduling, where the goal
is usually to minimize the completion time or to maximize the number
of jobs completed in a given timeframe while subject to some limited
resources~\cite{blazewicz1983scheduling,carlier1982scheduling}.  For
example, in addition to time on a machine, a job could require a
certain amount of another resource and would be eligible to be
scheduled only if the {\em inventory} for this resource is sufficient.

A general framework for these types of problems is called scheduling
with nonrenewable resources.  Here, {\em nonrenewable} means not
abundantly available, but rather replenished according to some rules,
such as periodically and in predetermined increments (as in the
gasoline problem), or in specified increments that can be scheduled by
the user (as in the alternating stock size problem), or at some
arbitrary fixed timepoints.  Examples for scheduling problems in this
framework are described by Briskorn et
al.~\cite{briskorn2010complexity}, by Gy\"orgyi and
Kis~\cite{gyorgyi2014approximation, gyorgyi2015approximability}, and
by Morsy and Pesch~\cite{morsy2015approximation}.  Although the
admissibility of a schedule is affected by the availability of a
resource (e.g. whether or not there is sufficient inventory),
minimizing the inventory is not a main objective in these works.

For example, suppose we are given a set of jobs to be scheduled on a
single machine.  Each job consumes some resource and is only allowed
to be scheduled at a timepoint if there is a sufficient supply
available for that job at this timepoint.  Jobs may have different
resource requirements.  Periodically, at timepoints and in increments
known in advance, the resource will be replenished.  The goal is to
minimize the completion time.  If at some timepoint there is
insufficient inventory for any job to be scheduled, then no job can be
run, leading to gaps in the schedule and ultimately a later completion
time.  This problem of minimizing the completion time is polynomial
time solvable (sort the jobs according to resource requirement), but
an optimal schedule may contain idle times.

Suppose that we have some investment amount $\alpha$ that we can add
to the inventory in advance to ensure that there is always sufficient
inventory to schedule some job, resulting in a schedule with no empty
timeslots, i.e. the optimal completion time.  There is a natural
connection between this scenario and the gasoline problem: Let
$|\alpha|$ in Equation \eqref{main_obj_alpha} denote the available
investment.  For this investment, suppose we wish to minimize $\beta$,
which is the maximum inventory, to complete the jobs in the
optimal completion time.  For any feasible $\alpha$ and $\beta$, our
algorithm in Section \ref{section:GasolineProblem} produces a schedule
with the optimal completion time using inventory size at most $\beta +
\mu$.

There are other works that directly address the problem of minimizing
the maximum or cumulative inventory.  Monma considers a problem in
which each job has a specified effect on the inventory
level~\cite{monma1980sequencing}.  Neumann and Schwindt consider a
scheduling problem in which the inventory is subject to both upper and
lower bounds~\cite{neumann2003project}.  However, to the best of our
knowledge, our work is the first to give approximation algorithms for
the problem of minimizing the maximum inventory for nonrenewable
resource scheduling with fixed replenishments.

Finally, we note that the stock size problem is closely related to the
Steinitz problem in one dimension.  Given a set of vectors $v_1, v_2,
\dots v_n \in \mathbb{R}^d$ where $||v_i|| \leq 1$ for some fixed norm
and $\sum_{i=1}^n v_i = 0$, the Steinitz problem is to find a
permutation of the vectors so that the norm of the sum of each prefix
is bounded.  The objective of the Steinitz problem is to give a
worst-case bound in terms of $d$ on the value of a maximum prefix over
all inputs.  The objective of the stock size problem, however, is to provide a relative bound on the value of the maximum
prefix for a specific input instance.  Via the bound of $d$ for
the Steinitz problem~\cite{grinberg1980value}, we can obtain a
2-approximation algorithm for the stock size problem, but this does
not match the best-known bound of $3/2$ due to Kellerer et
al.~\cite{kellerer1998stock}.  We refer the reader to Section 6 of
\cite{kellerer1998stock} for a discussion of the connection between
these two problems.  For more on the low-dimensional Steinitz problem,
we refer the reader to the work of Banaszczyk~\cite{banaszczyk1987steinitz}.


\section{Algorithms for the Alternating Stock Size
  Problem} \label{sec:alternating}

The existence of a feasible solution for the alternating stock size
problem follows from the solution for the gasoline puzzle.  (See
Figure \ref{drawing_proof} for more details.)
\begin{figure}
  \centering
  \scalebox{.2}{
\ifx\du\undefined
  \newlength{\du}
\fi
\setlength{\du}{15\unitlength}
\begin{tikzpicture}
\pgftransformxscale{1.000000}
\pgftransformyscale{-1.000000}
\definecolor{dialinecolor}{rgb}{0.000000, 0.000000, 0.000000}
\pgfsetstrokecolor{dialinecolor}
\definecolor{dialinecolor}{rgb}{1.000000, 1.000000, 1.000000}
\pgfsetfillcolor{dialinecolor}
\definecolor{dialinecolor}{rgb}{0.921569, 0.000000, 0.000000}
\pgfsetstrokecolor{dialinecolor}
\node[anchor=west] at (12.179300\du,-5.524740\du){};
\pgfsetlinewidth{0.150000\du}
\pgfsetdash{}{0pt}
\pgfsetdash{}{0pt}
\pgfsetbuttcap
{
\definecolor{dialinecolor}{rgb}{0.000000, 0.000000, 0.000000}
\pgfsetfillcolor{dialinecolor}
\definecolor{dialinecolor}{rgb}{0.000000, 0.000000, 0.000000}
\pgfsetstrokecolor{dialinecolor}
\draw (-0.129681\du,0.000000\du)--(-0.129681\du,0.000000\du);
}
\pgfsetlinewidth{0.200000\du}
\pgfsetdash{}{0pt}
\pgfsetdash{}{0pt}
\pgfsetbuttcap
{
\definecolor{dialinecolor}{rgb}{0.000000, 0.000000, 0.000000}
\pgfsetfillcolor{dialinecolor}
\pgfsetarrowsend{to}
\definecolor{dialinecolor}{rgb}{0.000000, 0.000000, 0.000000}
\pgfsetstrokecolor{dialinecolor}
\draw (-0.982925\du,0.089815\du)--(38.146800\du,0.089815\du);
}
\pgfsetlinewidth{0.200000\du}
\pgfsetdash{}{0pt}
\pgfsetdash{}{0pt}
\pgfsetbuttcap
{
\definecolor{dialinecolor}{rgb}{0.000000, 0.000000, 0.000000}
\pgfsetfillcolor{dialinecolor}
\pgfsetarrowsend{to}
\definecolor{dialinecolor}{rgb}{0.000000, 0.000000, 0.000000}
\pgfsetstrokecolor{dialinecolor}
\draw (0.005458\du,1.092100\du)--(0.015412\du,-17.262100\du);
}
\pgfsetlinewidth{0.200000\du}
\pgfsetdash{}{0pt}
\pgfsetdash{}{0pt}
\pgfsetmiterjoin
\pgfsetbuttcap
{
\definecolor{dialinecolor}{rgb}{0.074510, 0.023529, 0.843137}
\pgfsetfillcolor{dialinecolor}
{\pgfsetcornersarced{\pgfpoint{0.000000\du}{0.000000\du}}\definecolor{dialinecolor}{rgb}{0.074510, 0.023529, 0.843137}
\pgfsetstrokecolor{dialinecolor}
\draw (-0.129681\du,0.000000\du)--(3.129680\du,-9.000000\du)--(6.129680\du,-7.000000\du)--(9.129680\du,-11.000000\du)--(12.000000\du,1.000000\du)--(15.129700\du,-13.000000\du)--(18.129700\du,-9.000000\du)--(21.000000\du,-11.000000\du)--(25.000000\du,5.000000\du);
}}
\pgfsetlinewidth{0.200000\du}
\pgfsetdash{}{0pt}
\pgfsetdash{}{0pt}
\pgfsetbuttcap
{
\definecolor{dialinecolor}{rgb}{0.000000, 0.000000, 0.000000}
\pgfsetfillcolor{dialinecolor}
\definecolor{dialinecolor}{rgb}{0.000000, 0.000000, 0.000000}
\pgfsetstrokecolor{dialinecolor}
\draw (-0.039866\du,-0.134723\du)--(0.015717\du,5.219620\du);
}
\pgfsetlinewidth{0.200000\du}
\pgfsetdash{}{0pt}
\pgfsetdash{}{0pt}
\pgfsetmiterjoin
\pgfsetbuttcap
{
\definecolor{dialinecolor}{rgb}{0.074510, 0.023529, 0.843137}
\pgfsetfillcolor{dialinecolor}
{\pgfsetcornersarced{\pgfpoint{0.000000\du}{0.000000\du}}\definecolor{dialinecolor}{rgb}{0.074510, 0.023529, 0.843137}
\pgfsetstrokecolor{dialinecolor}
\draw (25.000000\du,5.000000\du)--(28.000000\du,-14.000000\du)--(31.000000\du,-5.000000\du)--(32.000000\du,-6.000000\du)--(37.000000\du,0.000000\du);
}}
\pgfsetlinewidth{0.150000\du}
\pgfsetdash{{1.000000\du}{1.000000\du}}{0\du}
\pgfsetdash{{1.000000\du}{1.000000\du}}{0\du}
\pgfsetbuttcap
{
\definecolor{dialinecolor}{rgb}{1.000000, 0.027451, 0.027451}
\pgfsetfillcolor{dialinecolor}
\definecolor{dialinecolor}{rgb}{1.000000, 0.027451, 0.027451}
\pgfsetstrokecolor{dialinecolor}
\draw (25.000000\du,-15.000000\du)--(25.000000\du,6.000000\du);
}
\pgfsetlinewidth{0.200000\du}
\pgfsetdash{}{0pt}
\pgfsetdash{}{0pt}
\pgfsetmiterjoin
\pgfsetbuttcap
{
\definecolor{dialinecolor}{rgb}{1.000000, 0.027451, 0.027451}
\pgfsetfillcolor{dialinecolor}
{\pgfsetcornersarced{\pgfpoint{0.000000\du}{0.000000\du}}\definecolor{dialinecolor}{rgb}{1.000000, 0.027451, 0.027451}
\pgfsetstrokecolor{dialinecolor}
\draw (0.941995\du,2.132950\du)--(2.084740\du,0.990209\du)--(0.184504\du,0.251514\du)--(0.976624\du,2.063690\du);
}}
\pgfsetlinewidth{0.200000\du}
\pgfsetdash{{1.000000\du}{1.000000\du}}{0\du}
\pgfsetdash{{1.000000\du}{1.000000\du}}{0\du}
\pgfsetmiterjoin
\pgfsetbuttcap
{
\definecolor{dialinecolor}{rgb}{1.000000, 0.027451, 0.027451}
\pgfsetfillcolor{dialinecolor}
\definecolor{dialinecolor}{rgb}{1.000000, 0.027451, 0.027451}
\pgfsetstrokecolor{dialinecolor}
\pgfpathmoveto{\pgfpoint{1.565310\du}{1.544270\du}}
\pgfpathcurveto{\pgfpoint{4.653850\du}{4.673990\du}}{\pgfpoint{16.892800\du}{6.796550\du}}{\pgfpoint{25.005400\du}{5.025790\du}}
\pgfusepath{stroke}
}
\end{tikzpicture}} \hspace{10mm}
  \scalebox{.2}{
\ifx\du\undefined
  \newlength{\du}
\fi
\setlength{\du}{15\unitlength}
\begin{tikzpicture}
\pgftransformxscale{1.000000}
\pgftransformyscale{-1.000000}
\definecolor{dialinecolor}{rgb}{0.000000, 0.000000, 0.000000}
\pgfsetstrokecolor{dialinecolor}
\definecolor{dialinecolor}{rgb}{1.000000, 1.000000, 1.000000}
\pgfsetfillcolor{dialinecolor}
\definecolor{dialinecolor}{rgb}{0.921569, 0.000000, 0.000000}
\pgfsetstrokecolor{dialinecolor}
\node[anchor=west] at (12.176700\du,25.519000\du){};
\pgfsetlinewidth{0.150000\du}
\pgfsetdash{}{0pt}
\pgfsetdash{}{0pt}
\pgfsetbuttcap
{
\definecolor{dialinecolor}{rgb}{0.000000, 0.000000, 0.000000}
\pgfsetfillcolor{dialinecolor}
\definecolor{dialinecolor}{rgb}{0.000000, 0.000000, 0.000000}
\pgfsetstrokecolor{dialinecolor}
\draw (-0.132319\du,31.043800\du)--(-0.132319\du,31.043800\du);
}
\pgfsetlinewidth{0.200000\du}
\pgfsetdash{}{0pt}
\pgfsetdash{}{0pt}
\pgfsetbuttcap
{
\definecolor{dialinecolor}{rgb}{0.000000, 0.000000, 0.000000}
\pgfsetfillcolor{dialinecolor}
\pgfsetarrowsend{to}
\definecolor{dialinecolor}{rgb}{0.000000, 0.000000, 0.000000}
\pgfsetstrokecolor{dialinecolor}
\draw (-0.950200\du,31.043800\du)--(38.179500\du,31.043800\du);
}
\pgfsetlinewidth{0.200000\du}
\pgfsetdash{}{0pt}
\pgfsetdash{}{0pt}
\pgfsetbuttcap
{
\definecolor{dialinecolor}{rgb}{0.000000, 0.000000, 0.000000}
\pgfsetfillcolor{dialinecolor}
\pgfsetarrowsend{to}
\definecolor{dialinecolor}{rgb}{0.000000, 0.000000, 0.000000}
\pgfsetstrokecolor{dialinecolor}
\draw (-0.082319\du,31.993800\du)--(-0.075948\du,6.280030\du);
}
\pgfsetlinewidth{0.150000\du}
\pgfsetdash{{1.000000\du}{1.000000\du}}{0\du}
\pgfsetdash{{1.000000\du}{1.000000\du}}{0\du}
\pgfsetbuttcap
{
\definecolor{dialinecolor}{rgb}{1.000000, 0.027451, 0.027451}
\pgfsetfillcolor{dialinecolor}
\definecolor{dialinecolor}{rgb}{1.000000, 0.027451, 0.027451}
\pgfsetstrokecolor{dialinecolor}
\draw (12.090600\du,10.937100\du)--(12.090600\du,31.937100\du);
}
\pgfsetlinewidth{0.200000\du}
\pgfsetdash{}{0pt}
\pgfsetdash{}{0pt}
\pgfsetmiterjoin
\pgfsetbuttcap
{
\definecolor{dialinecolor}{rgb}{0.074510, 0.023529, 0.843137}
\pgfsetfillcolor{dialinecolor}
{\pgfsetcornersarced{\pgfpoint{0.000000\du}{0.000000\du}}\definecolor{dialinecolor}{rgb}{0.074510, 0.023529, 0.843137}
\pgfsetstrokecolor{dialinecolor}
\draw (12.000000\du,26.000000\du)--(15.259361\du,17.000000\du)--(18.259361\du,19.000000\du)--(21.259361\du,15.000000\du)--(24.129681\du,27.000000\du)--(27.259381\du,13.000000\du)--(30.259381\du,17.000000\du)--(33.129681\du,15.000000\du)--(37.129681\du,31.000000\du);
}}
\pgfsetlinewidth{0.200000\du}
\pgfsetdash{}{0pt}
\pgfsetdash{}{0pt}
\pgfsetmiterjoin
\pgfsetbuttcap
{
\definecolor{dialinecolor}{rgb}{0.074510, 0.023529, 0.843137}
\pgfsetfillcolor{dialinecolor}
{\pgfsetcornersarced{\pgfpoint{0.000000\du}{0.000000\du}}\definecolor{dialinecolor}{rgb}{0.074510, 0.023529, 0.843137}
\pgfsetstrokecolor{dialinecolor}
\draw (-0.132319\du,31.043800\du)--(3.000000\du,12.000000\du)--(6.000000\du,21.000000\du)--(7.000000\du,20.000000\du)--(12.000000\du,26.000000\du);
}}
\end{tikzpicture}}
  \caption{Transforming an arbitrary alternating sequence into a
    feasible solution.}\label{drawing_proof}
\end{figure}
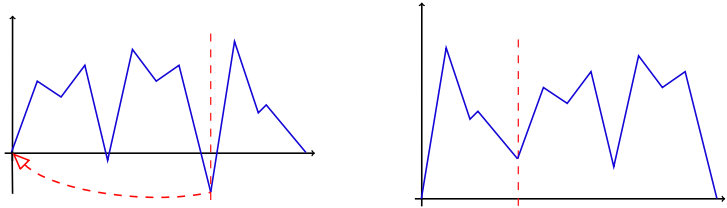
Furthermore, the upper bound of $2\mu$ is also tight for the
alternating problem.  If we modify the example given in
\cite{kellerer1998stock}, we have an example for the alternating
problem with an optimal stock size of $2p-3$, while $\mu=p$.
\begin{eqnarray*}
X  = \{\underbrace{p-1, \dots, p-1}_{p \text{ entries }}, 2\},\hspace{10mm}
Y  = \{\underbrace{p, \dots, p}_{p-1 \text{ entries }}, 1,1\}.\\
\end{eqnarray*}

In this section, we will present algorithms for the alternating stock
size problem.  We will use the notion of a $(q,T)$-pair, which is a
special case of a $(q,T)$-batch introduced and used by
\cite{kellerer1998stock} for the stock size problem.  
\begin{definition}{\cite{kellerer1998stock}}
A pair of jobs $\{x,y\}$, for $x\in X$ and $y \in Y$, is called
a $(q,T)$-pair for positive reals $T$ and $q \leq 1$, if:
\begin{enumerate}
\item[(i)] $x, y \leq T$, 
\item[(ii)] $|x-y| \leq qT$.
\end{enumerate}
\end{definition}
The following lemma is a special case of Lemma
3 in \cite{kellerer1998stock}, and the proofs are identical.
We provide the proof in Appendix \ref{appendix2} for the sake of
completeness.
\begin{lemma}\label{thm:batch_thm}
For positive $T$, $q \leq 1$ and a set of jobs partitioned into
$(q,T)$-pairs, we can find an alternating sequence of the jobs with
maximum stock size less than $(1+q)T$.
\end{lemma}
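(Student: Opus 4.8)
The plan is to fix an ordering of the given pairs and build the alternating sequence by concatenating the pairs, each time placing the pair's $x$-job immediately before its $y$-job; this is automatically alternating, so the only freedom is the order $\pi$ in which the pairs are listed. Write $\delta_j = x_j - y_j$ for pair $j$, so $|\delta_j|\le qT$ and $\sum_j \delta_j = 0$ (the two sets have equal sums). With the pairs in order $\pi(1),\pi(2),\dots$, let $C_k = \sum_{i\le k}\delta_{\pi(i)}$ be the stock after the $k$-th pair ($C_0=0$). Since each $y_j>0$, a prefix ending in a $y$-job has strictly smaller value than the preceding prefix, so the maximum stock is $\max_k\bigl(C_{k-1}+x_{\pi(k)}\bigr)$. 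First I would record two elementary bounds on this quantity: if $\pi(k)$ is an up-pair ($\delta_{\pi(k)}\ge 0$), then $C_{k-1}+x_{\pi(k)}\le C_{k-1}+T$; if it is a down-pair ($\delta_{\pi(k)}<0$), then, using the identity $C_{k-1}+x_{\pi(k)}=C_k+y_{\pi(k)}$ together with $y_{\pi(k)}\le T$, we get $C_{k-1}+x_{\pi(k)}\le C_k+T$. Hence the whole sequence stays below $(1+q)T$ as soon as the ordering guarantees $C_{k-1}<qT$ before every up-pair and $C_k<qT$ after every down-pair.

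So it suffices to order the pairs so that (a) $C_k\ge 0$ for all $k$; (b) $C_{k-1}<qT$ whenever $\pi(k)$ is an up-pair; and (c) $C_k<qT$ whenever $\pi(k)$ is a down-pair. I would build such an order greedily. Maintain the current stock $C$. If there is an up-pair that can be placed \emph{safely} --- either placing it keeps the stock below $qT$, or there is a down-pair that, placed immediately afterwards, brings the stock back below $qT$ --- then place that up-pair (together with its corrector, if one is needed); otherwise place a down-pair, which lowers $C$ (and, when $C\ge qT$, is chosen large enough to pull it back below $qT$). Since an up-pair only raises the stock, (a) is never endangered by an up-pair; and a down-pair keeps the stock non-negative as long as we never use one whose imbalance exceeds the current stock, which is no restriction once $C\ge qT$ because every $|\delta_j|\le qT$. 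The real content lies in (b) and (c).

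The step I expect to be the main obstacle is proving that this greedy is always well-defined --- that a legal move exists at every step and that the ``$C<qT$'' checkpoints are met. The subtlety is that one cannot simply keep $C_k$ inside $[0,qT]$ after every pair: an up-pair may be forced to push the stock above $qT$, so the argument must tolerate a transient excursion into $[qT,2qT)$ that is corrected at once by a down-pair of sufficiently large imbalance, and one must show such a down-pair is available. This is exactly where $\sum_j\delta_j=0$ does the work: the imbalances of the pairs not yet placed sum to $-C$, so if $C>0$ it is impossible that only up-pairs remain (their imbalances are non-negative), which always supplies a down-pair to lower the stock; symmetrically an up-pair is available before the stock could underflow. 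A convenient first step that streamlines all of this is to replace the given pairing by the one obtained from sorting $X$ and $Y$ non-increasingly and matching them position by position --- this is still a $(q,T)$-pairing, by a short pigeonhole argument (were $x_{(i)}>y_{(i)}+qT$ for some $i$, the $i$ largest $x$-values could, under the original pairing, only be matched with the $i-1$ largest $y$-values) --- and it flattens the imbalances, so the bookkeeping in (b)--(c) becomes routine.
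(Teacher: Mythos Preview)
Your reduction to the three conditions (a), (b), (c) is correct and clean. The greedy you sketch, however, runs opposite to the paper's: you try to place up-pairs first (with correctors), while the paper always \emph{prefers a down-pair} and places an up-pair only when every remaining down-pair $B^-$ would make the stock negative. That single failure condition at a ``breakpoint'' $S_i$ yields not only $S_i<qT$ (your (b)) but also the stronger $S_i+x(B^-)<y(B^-)\le T$ for every remaining down-pair. The latter bounds the peak during each subsequent down-pair directly as $(S_i+x(B^-))+(x(B^+)-y(B^+))<T+qT$, with no need to maintain your condition (c) after every down-pair and no corrector-hunting at all.

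Your route can be made to work, but the argument you give does not close it. From $\sum_j\delta_j=0$ you obtain that \emph{some} down-pair remains when $C>0$, but not that it is large enough to serve as a corrector, nor small enough to preserve (a) when you fall into the ``otherwise'' branch with $C<qT$. The observation you are missing is this: if the ``otherwise'' branch is reached while up-pairs remain, then the very unsafeness of those up-pairs forces every remaining down-pair to satisfy $|\delta_D|\le C+\delta_U-qT<C$ (take the smallest remaining $\delta_U\le qT$), so placing any of them preserves (a), and (c) is automatic since $C$ was already below $qT$. Once you have this, the re-pairing manoeuvre at the end is unnecessary --- it is a true statement (essentially Lemma~\ref{lem:alpha-beta}), but it does not shortcut the argument, and ``flattens the imbalances'' is not doing any work you can point to.
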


\subsection{The Pairing Algorithm} \label{sec:alg_pair}

We now consider the simple algorithm that pairs $x$- and $y$-jobs, and
then applies Lemma \ref{thm:batch_thm} to sequence the pairs.  Suppose
that there is some specific pairing that matches each $x_i$ to some
$y_j$, and consider the difference $x_i-y_j$ for each pair.  Let
$\alpha_1 \geq ... \geq \alpha_{n_1}$ denote the positive differences,
and let $\beta_1 \geq ... \geq \beta_{n_2}$ denote the absolute values
of the negative differences, where $n_1 + n_2 = n$.

\begin{lemma}\label{lem:alpha-beta}
The matching~$M^\star$ that matches~$x_i$ and~$y_i$ for
all~$i\in\{1,\ldots,n\}$ minimizes both $\alpha_1$ and~$\beta_1$.
\end{lemma}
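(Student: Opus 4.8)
The plan is to recognize this as a min--max rearrangement statement and to prove it by a two-element exchange argument. Let $\pi$ be the permutation describing a matching $M$, i.e.\ $M$ pairs $x_i$ with $y_{\pi(i)}$. Then, by definition, $\alpha_1(M) = \max_i \bigl(x_i - y_{\pi(i)}\bigr)$ whenever this quantity is positive (and $M$ produces no positive difference at all otherwise), and similarly $\beta_1(M) = \max_i \bigl(y_{\pi(i)} - x_i\bigr)$. Hence it suffices to show that the identity permutation minimizes $f(\pi) := \max_i (x_i - y_{\pi(i)})$ over all permutations $\pi$ of $\{1,\dots,n\}$; the symmetric claim for $\beta_1$ follows by the identical argument with the roles of $X$ and $Y$ interchanged (both are sorted non-increasingly), and since $M^\star$ is the identity matching, it attains both minima simultaneously.

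For the core step, suppose $\pi \neq \mathrm{id}$, so $\pi$ has an adjacent inversion: there is an index $i$ with $\pi(i) > \pi(i+1)$. Let $\pi'$ be obtained from $\pi$ by swapping these two values. Only the terms at positions $i$ and $i+1$ change, and since $x_i \geq x_{i+1}$ and, because $\pi(i) > \pi(i+1)$ and $Y$ is sorted non-increasingly, $y_{\pi(i)} \leq y_{\pi(i+1)}$, we get
\[
x_i - y_{\pi'(i)} = x_i - y_{\pi(i+1)} \leq x_i - y_{\pi(i)}, \qquad
x_{i+1} - y_{\pi'(i+1)} = x_{i+1} - y_{\pi(i)} \leq x_i - y_{\pi(i)}.
\]
Therefore $\max\bigl(x_i - y_{\pi'(i)},\, x_{i+1} - y_{\pi'(i+1)}\bigr) \leq x_i - y_{\pi(i)} \leq \max\bigl(x_i - y_{\pi(i)},\, x_{i+1} - y_{\pi(i+1)}\bigr)$, and since all other terms are unchanged, $f(\pi') \leq f(\pi)$.

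To conclude, start from an arbitrary $\pi$ and repeatedly perform such adjacent swaps; each one strictly decreases the number of inversions, so after finitely many steps (the usual bubble-sort argument) we reach $\mathrm{id}$, with $f$ never having increased. Hence $f(\mathrm{id}) \leq f(\pi)$ for every $\pi$, giving $\alpha_1(M^\star) \leq \alpha_1(M)$ for every matching $M$; the same argument with $x$ and $y$ swapped gives $\beta_1(M^\star) \leq \beta_1(M)$. I do not expect a genuine obstacle here; the only thing to be careful about is the bookkeeping at the start: the right quantity to track through the swaps is the maximum over \emph{all} pair differences (not merely the positive ones), since a single swap may flip the sign of a difference, and once that is set up each swap reduces to the elementary two-term inequality above.
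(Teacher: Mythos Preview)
Your proof is correct and takes essentially the same approach as the paper: both argue by a two-element exchange that neither $\alpha_1$ nor $\beta_1$ can increase under an uncrossing swap, then iterate to reach $M^\star$. The only cosmetic differences are that the paper swaps an arbitrary crossing pair rather than an adjacent inversion, and it handles $\alpha_1$ and $\beta_1$ simultaneously (via both the $\max$ and the $\min$ of the two affected differences) instead of invoking symmetry for $\beta_1$ separately.
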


\begin{proof}
Let~$M$ be an arbitrary matching that is different
from~$M^\star$. Then there exist edges~$(i_1,j_1)\in M$
and~$(i_2,j_2)\in M$ with~$i_1>i_2$ and~$j_1<j_2$. We show that we can
replace these edges by the edges~$(i_1,j_2)$ and~$(i_2,j_1)$ without
increasing~$\alpha_1$ or~$\beta_1$. From this the theorem follows
because after a finite number of such exchanges we obtain the
matching~$M^\star$.

Since $x_1\ge\ldots\ge x_n$ and~$y_1\ge\ldots\ge y_n$, we have $x_{i_1}\le x_{i_2}$ and $y_{j_1}\ge y_{j_2}$. This implies~$x_{i_1}-y_{j_1} \le x_{i_2}-y_{j_2}$ and 
\[
   \max(x_{i_1}-y_{j_1},x_{i_2}-y_{j_2})
   = x_{i_2}-y_{j_2} \ge \max(x_{i_1}-y_{j_2},x_{i_2}-y_{j_1}).
\]
The aforementioned inequalities also imply that
\[
   \min(x_{i_1}-y_{j_1},x_{i_2}-y_{j_2})
   = x_{i_1}-y_{j_1} \le \min(x_{i_1}-y_{j_2},x_{i_2}-y_{j_1}).
\]
Hence, neither~$\alpha_1$ nor~$\beta_1$ can increase due to the exchange.
\end{proof}

The pairing given by $M^\star$ directly results in a $2$-approximation
for the alternating stock size problem, by applying Lemma
\ref{thm:batch_thm}.  Without loss of generality, let us assume that
$\max\{\alpha_1, \beta_1\} = \alpha_1$, and observe
that~$\alpha_1\le\mu$.  Then $M^\star$ partitions the input into
$(\alpha_1/\mu,\mu)$-pairs.  Applying Lemma \ref{thm:batch_thm}, we
obtain an algorithm that computes a solution with value at most $(1+\alpha_1/\mu)\mu=\mu +
\alpha_1 \leq 2\mu$.  We note that if $\alpha_1 \leq (1-\epsilon)\mu$,
then we have a $(2-\epsilon)$-approximation.


\subsection{Lower Bound for the Alternating Stock Size
  Problem}\label{sec:lower_bound}

In order to obtain an approximation ratio better than $2$, we need to
use a lower bound that is more accurate than $\mu$.  We now introduce
a lower bound closely related to the one given for the stock size
problem by Kellerer et al. (Lemma 8 in \cite{kellerer1998stock}).  We
refer to a real number $C$, which divides the sets $X$ and $Y$ into
sets of small jobs and big jobs, as a \textit{barrier}.  Let $C \leq
\mu $ be a barrier such that
\begin{eqnarray}
X =  \{a_1 \geq a_2 \geq ... \geq a_{n_a} \geq & C & > v_k \geq v_{k-1}
\geq ... \geq v_{1}\}, \label{barrier1}\\ 
Y = \{~b_1 \geq b_2 \geq ... \geq b_{n_b} \geq & C & >
w'_1 \geq w'_2 \geq ...  \geq w'_{n_a-n_b} \geq w_1 \geq w_2 \geq ... \geq
w_{k}\}, \label{barrier2}
\end{eqnarray}
where, without loss of generality, $n_a \geq n_b$.  (If not, then by
swapping the $x$'s and the $y$'s we have a reverse (but equivalent) sequencing
problem with $n_a \geq n_b$).  The elements of \eqref{barrier1} are
all of the $x$-jobs (partitioned into the sets $A$ and $V$) and the
elements of \eqref{barrier2} are all of the $y$-jobs.  The jobs in $Y$
that have value less than $C$ are partitioned into $W'$ and $W$.

Let $A' =\{a_{n_b+1}, \dots, a_{n_a}\} = \{a'_1, \dots,
a'_{n_a-n_b}\}$, let $V_i$ denote the $i$ smallest $v_j$'s,
i.e.\ $\{v_1, v_2, \dots, v_i\}$, and let $W_i$ denote the $i$ largest
$w_j$'s in $W$, i.e.\ $\{w_1, w_2, \dots, w_i\}$.  (Note that $A'$,
$V_i,$ and $W_i$ each depend on $C$, but in order to avoid cumbersome
notation, we do not use superscript $C$.)  Let $s \in \{1, \dots,
n_a-n_b\}$.  After fixing a barrier $C$, let $h$ be the (unique) index
such that $w_h > v_h$ and $w_{h+1} \leq v_{h+1}$.  If no such index
$h$ exists because $w_1 < v_1$, then let $h=0$.  If no such index $h$
exists because $w_k > v_k$, then let $h=k$.  Recall that $S^*$ is the
value of an optimal ordering.  When $n_a > n_b$, then for any $s \in
\{1, \dots, n_a - n_b\}$, the following inequality applies.  For a
particular $s$, if the right-hand side of the inequality is positive,
then it yields a lower bound on $S^*$.

\begin{lemma}\label{lem:lower_bound}
Suppose $n_a > n_b$.  Let $s \in \{1, \dots, n_a-n_b\}$.
Then the following inequality holds:
\begin{eqnarray*}
S^* \geq LB(C) = \frac{1}{n_a-n_b-s+1} \cdot \left( 2 \sum_{i=s}^{n_a-n_b}{a'_i} -
\sum_{i=s}^{n_a-n_b}{w'_i} + \sum_{i=1}^h (v_i - w_i)\right).
\end{eqnarray*}
\end{lemma}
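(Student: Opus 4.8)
Fix an arbitrary feasible pair $(\sigma,\nu)$. For $t\in\{0,\dots,n\}$ write $Q_t=\sum_{i=1}^{t}\big(x_{\sigma(i)}-y_{\nu(i)}\big)$ for the stock right after pair~$t$ (so $Q_0=Q_n=0$ and $Q_t\ge 0$ by feasibility), and for $t\ge 1$ write $P_t=Q_{t-1}+x_{\sigma(t)}$ for the stock right after placing the $t$-th $x$-job; by definition $S^*=\max_t P_t$. The single elementary fact I will use is that for all $0\le\ell<r\le n$,
\[
S^*\ \ge\ P_r\ \ge\ P_r-Q_\ell\ =\ \sum_{i=\ell+1}^{r}x_{\sigma(i)}-\sum_{i=\ell+1}^{r-1}y_{\nu(i)}.
\]
I will apply this to $N:=n_a-n_b-s+1$ well-chosen index pairs, add the inequalities, and divide by~$N$.

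To choose the anchors, call an $x$-job \emph{big} if it lies in $A$ (value $\ge C$), a $y$-job \emph{big} if it lies in $B$, and the remaining jobs \emph{small}; let $\phi_t$ be the number of big $x$-jobs among the first $t$ placed minus the number of big $y$-jobs among the first $t$ removed. Since $\phi_0=0$, $\phi_n=n_a-n_b$ and $\phi$ changes by at most~$1$ per pair, for each $m\in\{s,\dots,n_a-n_b\}$ there is a last pair $\ell_m$ at which $\phi$ steps up from $m-1$ to $m$ without ever dropping back below~$m$ afterwards. One checks that $\ell_s<\dots<\ell_{n_a-n_b}$ and that pair $\ell_m$ places a big $x$-job and removes a small $y$-job. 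Applying the elementary inequality to the consecutive windows $W_m:=\{\ell_{m-1}+1,\dots,\ell_m\}$ (with $\ell_{s-1}:=0$) gives
\[
N\cdot S^*\ \ge\ \sum_{m=s}^{n_a-n_b}\Big(\,\sum_{i\in W_m}x_{\sigma(i)}-\sum_{i\in W_m,\ i<\ell_m}y_{\nu(i)}\,\Big).
\]

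It remains to show the right-hand side is at least $2\sum_{i=s}^{n_a-n_b}a'_i-\sum_{i=s}^{n_a-n_b}w'_i+\sum_{i=1}^{h}(v_i-w_i)$, and there are three contributions. (1) \emph{Big $x$-jobs, with the factor~$2$}: the job placed at $\ell_m$ is big, and the stock carried into $W_m$ is $Q_{\ell_{m-1}}$, which, since the excess is at least $m-1$ there, still contains the value of at least $m-1$ further big $x$-jobs left ``trapped'' by the forced alternation; summed over~$m$ these pieces cover, with multiplicity, a collection of $2N$ big-$x$ values, whose total is at least $2\sum_{i=s}^{n_a-n_b}a'_i$ because $a'_s,\dots,a'_{n_a-n_b}$ are the $N$ smallest big $x$-jobs (split the $2N$ smallest big $x$-jobs into their larger and smaller halves and compare term by term). (2) \emph{Small $y$-jobs}: across all windows at most~$N$ of them are subtracted, and an exchange argument using $w'_1\ge w'_2\ge\cdots$ bounds their total by $\sum_{i=s}^{n_a-n_b}w'_i$. (3) \emph{The $V$--$W$ interaction}: small $x$-jobs lying inside a window only increase the right-hand side, except where a small $y$-job of $W$ in the same window out-weighs the small $x$-job it is effectively matched to; pairing the $h$ smallest $v_i$ with the $h$ largest $w_i$ and invoking the definition of~$h$ (so $v_i<w_i$ exactly for $i\le h$), the worst-case net loss is $\sum_{i=1}^{h}(w_i-v_i)$, i.e.\ this term contributes $+\sum_{i=1}^{h}(v_i-w_i)$. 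Combining the three estimates and dividing by $N=n_a-n_b-s+1$ yields $S^*\ge LB(C)$.

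\noindent\textbf{Where the difficulty lies.}
The delicate step is the window/anchor bookkeeping. One must verify that the big $y$-jobs which happen to fall inside a window~$W_m$ can be cancelled against big $x$-jobs so that they contribute nothing negative to the summed bound; this is exactly where feasibility is essential, since whenever a big $y$-job of value $b\ge C$ is removed the preceding stock is at least~$b$, so $Q_t\ge 0$ cannot be dispensed with. One must also keep careful track of \emph{which} small $y$-jobs get subtracted --- in particular that the jobs in~$W$ dominated by the jobs in~$V$ cannot help reduce the stock --- which is precisely what produces the correction term $\sum_{i=1}^{h}(v_i-w_i)$. This is a direct, if technical, adaptation of Lemma~8 of \cite{kellerer1998stock}, the new feature being the factor~$2$ forced by the required alternation.
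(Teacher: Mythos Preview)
Your windowing idea has a structural gap that kills the factor~$2$. With the anchors $\ell_{s-1}=0<\ell_s<\dots<\ell_{n_a-n_b}$ and the disjoint consecutive windows $W_m=\{\ell_{m-1}+1,\dots,\ell_m\}$, summing the inequalities $S^*\ge P_{\ell_m}-Q_{\ell_{m-1}}$ telescopes:
\[
N\cdot S^*\ \ge\ \sum_{m=s}^{n_a-n_b}\bigl(P_{\ell_m}-Q_{\ell_{m-1}}\bigr)
\ =\ Q_{\ell_{n_a-n_b}}\;+\;\sum_{m=s}^{n_a-n_b}y_{\nu(\ell_m)}.
\]
The right-hand side is just one nonnegative stock value plus $N$ small $y$-values; nothing here forces $2\sum a'_i$. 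Your stated mechanism for the factor~$2$ --- that ``the stock carried into $W_m$ is $Q_{\ell_{m-1}}$, which \dots\ contains the value of at least $m-1$ further big $x$-jobs'' --- points the wrong way: in $P_{\ell_m}-Q_{\ell_{m-1}}$ the carried stock is \emph{subtracted}, so a large $Q_{\ell_{m-1}}$ weakens the bound rather than contributing big $x$-values. (Nor does $\phi_{\ell_{m-1}}=m-1$ force $Q_{\ell_{m-1}}$ to be large: the small $x$/small $y$ imbalance is $-(m-1)$, which can nearly cancel the big-job excess.) Related to this, your claim~(2) that ``at most $N$ small $y$-jobs are subtracted'' is false for disjoint windows: every $y$-job in $[1,\ell_{n_a-n_b}]$ except the $N$ anchor $y$'s appears in the telescoped sum.

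The paper obtains the factor~$2$ by a different mechanism that your setup cannot mimic with disjoint windows. It restricts the optimal sequence to the jobs of value $\ge C$ and looks at each pair $(a_i,a_j)$ of \emph{consecutive} $A$-jobs in that restricted sequence. In the full alternating schedule such a pair is separated only by small jobs, and since the stock just before $a_i$ is nonnegative, the peak just after $a_j$ is at least $a_i+a_j$ plus the (signed) slot contents; so each individual pair already yields a lower bound containing \emph{both} big values. There are at least $n_a-n_b$ such pairs, and after discarding at most $s-1$ of them (those whose slot uses $w'_1,\dots,w'_{s-1}$), the $N$ remaining pairs can, in the worst case, reuse the $N$ smallest big $x$-values $a'_s,\dots,a'_{n_a-n_b}$ twice --- that is where $2\sum a'_i$ comes from. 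To salvage your approach you would need overlapping intervals, each starting just before one big $A$-job and ending just after the next, rather than a disjoint partition anchored at the level-crossings of $\phi$.
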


\begin{proof}
Following the proof of Lemma 8 in \cite{kellerer1998stock}, consider
the job sequence $L^*$ that is the optimal ordering restricted only to
the jobs with value at least $C$.  Then there are at least $n_a-n_b$
jobs in $A$ whose direct successor in $L^*$ is another job in $A$.
(If the very last job in $L^*$ is in $A$, then we say that the first
job in $L^*$ is its direct successor.)  Consider such a job $a_i$ and
its direct successor in $L^*$, $a_j$. Such a pair must be separated in
the optimal schedule by either a single job from $W'\cup W$ or by an
alternating sequence of jobs from $W'\cup W$ and $V$.  We refer to the
spaces in the optimal solution between~$a_i$ and~$a_j$ as {\em
  slots}. We will refer to the value~$a_i+a_j$ plus the total value of
the jobs in the corresponding slot as the value of the pair~$a_i$ and
$a_j$. Note that the value of the pair~$a_i$ and~$a_j$ is a lower
bound on $S^*$ for any pair~$a_i$ and~$a_j$ that is consecutive
in~$L^*$.

Consider the pairs of successive $a$'s in $L^*$ whose corresponding
slots do not contain jobs from the set $\{w_1', \dots, w_{s-1}'\}$.
There are at least~$(n_a-n_b-s+1)$ such pairs and we now consider
the~$(n_a-n_b-s+1)$ such pairs with the smallest values.  We will
determine a lower bound on the average value of these pairs. Being
pessimistic (we want to obtain a high lower bound, and this assumption
may make it lower), we assume that these pairs involve
the~$(n_a-n_b-s+1)$ smallest values in $A'$. Furthermore, we assume
that each of these values appears in two of the considered pairs.
Moreover, in order to decrease the lower bound even more, it may be
the case that all of the pairs in the set $\{(v_i,w_i)\mid 1\leq i
\leq h\}$ are placed in some slots.  Thus, the average value of the
considered pairs is at least
\[
   \frac{1}{n_a-n_b-s+1} \cdot \left(2\sum_{i=s}^{n_a-n_b}{a'_i} -
  \sum_{i=s}^{n_a-n_b}{w'_i} + \sum_{i=1}^h (v_i - w_i)\right).
\]
This directly leads to our lower bound, because there exists at least one pair whose value is at least the average value.
\end{proof}

\subsection{Alternating Batches: Definition}\label{sec:alt_batch}

We need a few more tools before we can outline our new algorithm.  The
notion of {\em batches} introduced in \cite{kellerer1998stock}, to which we
briefly alluded before Lemma \ref{thm:batch_thm}, is quite useful for
the stock size problem.  For $B \subseteq X \cup Y$, let $x(B)$ and
$y(B)$ denote the total value of the $x$-jobs and $y$-jobs,
respectively, in $B$.  In its original form, the batching lemma (Lemma
3, \cite{kellerer1998stock}) calls for a partition of the input into
groups or batches such that for some fixed positive real numbers $T$
and $q \leq 1$, each group $B$ has the following properties: $x(B),
y(B) \leq T$ and $|x(B)-y(B)| \leq qT$.  Given such a partition of the
input, a sequence with stock size at most $(1+q)T$ can be produced.

This approach is not directly applicable to the alternating stock size
problem, because the output is not necessarily an alternating
sequence.  However, we will now show that the procedure can be
modified to yield a valid ordering.  With this goal in mind, we define
a new type of batch, which we call an \textit{alternating batch}.  An
alternating batch will either contain two jobs (small) or more than
two jobs (large).

The modified procedure to construct an ordering of the jobs first
partitions the input into alternating batches, then orders these
batches, and finally orders the jobs contained within each batch.
In the case of a small alternating batch, the batch will contain both
an $x$-job and a $y$-job, and the last step simply preserves this
order.  A large alternating batch will be required to fulfill certain
additional properties that allow the elements to be sequenced in a way
that is both alternating and feasible, i.e. all prefix sums are
nonnegative.

Suppose
$B = \{(x'_1, y'_1), (x'_2, y'_2), \dots , (x'_{\ell}, y'_{\ell})\}$,
and consider the following four properties:
\begin{itemize}
\item[(i)] $\sum_{i=1}^{\ell} x'_i - \sum_{i=1}^{\ell}  y'_i \geq 0$,
\item[(ii)] $x'_1 - y'_1 \geq 0$,
\item[(iii)] $x'_i - y'_i \leq 0$, for $2 \leq i \leq \ell$,
\item[(iv)] $y'_1 \geq y'_2 \geq ... \geq y'_{\ell}$.
\end{itemize}
  
\begin{lemma} \label{lemma: alt_batch}
If a batch $B$ satisfies properties (i), (ii), (iii) and (iv), then we
can sequence the elements in $B$ so that the items alternate, each
prefix is nonnegative, and the maximum
height (or prefix sum) of the sequence is $x'_1$.
\end{lemma}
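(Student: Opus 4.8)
The plan is to produce the ordering explicitly and verify it. Interleave the pairs in their given index order, i.e.\ output the sequence
$x'_1,\,y'_1,\,x'_2,\,y'_2,\,\ldots,\,x'_\ell,\,y'_\ell$. Since $B$ consists of $\ell$ $x$-jobs and $\ell$ $y$-jobs, this is a genuine alternating sequence of length $2\ell$ beginning with an $x$-job (which is forced, since the first prefix must be non-negative). Write $V_t$ for the $t$-th prefix sum; then $V_{2k}=\sum_{j=1}^{k}x'_j-\sum_{j=1}^{k}y'_j$ and $V_{2k-1}=V_{2k}+y'_k$, with $V_1=x'_1$.

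First I would record the ``sawtooth'' structure of the prefix sums: because all jobs are non-negative, $V_{2k}=V_{2k-1}-y'_k\le V_{2k-1}$ and $V_{2k}=V_{2k+1}-x'_{k+1}\le V_{2k+1}$, so every even-index prefix is a local minimum. Hence it suffices to check non-negativity at the even indices, and to compute the maximum it suffices to look at the odd indices. For non-negativity, property~(iii) gives $x'_j-y'_j\le 0$ for all $j\ge 2$, so the quantities $V_2\ge V_4\ge\cdots\ge V_{2\ell}$ are non-increasing in $k$; by property~(i), $V_{2\ell}=\sum_{j}x'_j-\sum_{j}y'_j\ge 0$, hence every even prefix, and therefore every prefix, is non-negative. (Property~(ii) is what anchors the very first step $V_2=x'_1-y'_1\ge 0$, i.e.\ the case $\ell=1$.)

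For the height bound, I would rewrite $V_{2k-1}=x'_1+\sum_{j=2}^{k}\bigl(x'_j-y'_{j-1}\bigr)$ after re-pairing $\sum_{j=2}^{k}x'_j-\sum_{j=1}^{k-1}y'_j=\sum_{j=2}^{k}(x'_j-y'_{j-1})$, and then note that for $j\ge 2$ we have $x'_j\le y'_j$ by property~(iii) and $y'_j\le y'_{j-1}$ by property~(iv), so each summand is $\le 0$. Thus $V_{2k-1}\le x'_1$ for every $k$, while $V_1=x'_1$ is attained, so the maximum prefix sum of the sequence is exactly $x'_1$, as claimed.

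There is no serious obstacle here: the whole argument is a short verification once the right ordering is written down, and the only place to be slightly careful is the index bookkeeping in the re-pairing $\sum_{j=2}^{k}x'_j-\sum_{j=1}^{k-1}y'_j=\sum_{j=2}^{k}(x'_j-y'_{j-1})$, together with remembering that the monotone chain of even prefixes must be pinned down at its \emph{last} entry via property~(i) rather than at its first via property~(ii).
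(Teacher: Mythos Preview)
Your proof is correct and follows essentially the same approach as the paper: you use the identical ordering $x'_1,y'_1,\ldots,x'_\ell,y'_\ell$, establish non-negativity via the monotone chain of even prefixes anchored by property~(i), and bound the height by the same chain $x'_j\le y'_j\le y'_{j-1}$ coming from (iii) and (iv). Your write-up is simply a bit more explicit in its bookkeeping than the paper's.
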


\begin{proof}
Place the items in the order: $x'_1, y'_1, x'_2, y'_2, \dots
x'_{\ell}, y'_{\ell}$.  All prefix sums of this sequence are
nonnegative, because by (ii) and (iii) only the first pair may have a
positive sum, and by (i) the
sum of all the pairs is nonnegative.
Since $x'_2 \leq y'_2 \leq
y'_1~ \Rightarrow ~x'_2 \leq y'_1$, after placing $x'_2$, we are
strictly less than height $x'_1$.  Repeating the argument, i.e. $x_i'
\leq y_i' \leq y_{i-1}' \Rightarrow x_i' \leq y_{i-1}'$, shows that
all prefix sums have value less than $x'_1$.
\end{proof}

\begin{definition}\label{def:alter_batch}
We call a set $B$ a \emph{$(1-\eps)$-alternating batch} if $B =
\{(x'_1, y'_1), (x'_2, y'_2), \dots , (x'_{\ell}, y'_{\ell})\}$ such
that
\begin{itemize}
  \item [(1)] $|\sum_{i=1}^{\ell} x'_i - \sum_{i=1}^{\ell}  y'_i |
    \leq (1-\epsilon)\mu$,
  \item [(2)] if $\ell > 1$, then conditions $(i)$ to $(iv)$ hold.
\end{itemize} 
\end{definition}

\begin{definition}\label{def:large_batch}
We say that a $(1-\eps)$-alternating batch with more than two jobs is
a {\em large} alternating batch.  In other words, a large alternating
batch obeys conditions (1) and (2) in Definition \ref{def:alter_batch}.
A {\em small} alternating batch contains only two jobs and obeys
condition (1) in Definition \ref{def:alter_batch}.
\end{definition}

Note that, by definition, in a large alternating batch $B$, the sum
of the $x$-jobs in $B$ is at least the sum of the $y$-jobs in $B$.

\begin{lemma} \label{lemma : alt_batch_alg}
If the sets $X$ and $Y$ can be partitioned into large and small
$(1-\eps)$-alternating batches, then we can find an alternating
sequence with maximum stock size less than $(2-\epsilon)\mu$.
\end{lemma}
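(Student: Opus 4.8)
The plan is to reduce to Lemma~\ref{thm:batch_thm} (the $(q,T)$-pair batching lemma with $q=1-\eps$ and $T=\mu$). The idea is that every \emph{large} alternating batch can be replaced, for the purposes of ordering, by a single synthetic $((1-\eps),\mu)$-pair that ``simulates'' it, so that the whole instance becomes a set of $((1-\eps),\mu)$-pairs; then I would apply Lemma~\ref{thm:batch_thm} and, at the very end, expand the synthetic pairs back into the real batches without increasing the stock size.

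First I would note that a \emph{small} $(1-\eps)$-alternating batch $\{(x,y)\}$ is literally a $((1-\eps),\mu)$-pair: $x\le x_1\le\mu$ and $y\le y_1\le\mu$, and condition~(1) of Definition~\ref{def:alter_batch} gives $|x-y|\le(1-\eps)\mu$. For a large batch $B=\{(x'_1,y'_1),\dots,(x'_\ell,y'_\ell)\}$ (here $\ell\ge 2$, so conditions (i)--(iv) hold and Lemma~\ref{lemma: alt_batch} applies), I would fix the internal order $x'_1,y'_1,x'_2,y'_2,\dots,x'_\ell,y'_\ell$; relative to its own starting height this block never drops below its start, reaches peak exactly $x'_1\le\mu$, and ends at net value $d_B:=x(B)-y(B)$. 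From (i) we get $d_B\ge 0$, from condition~(1) we get $d_B\le(1-\eps)\mu$, and from (ii)--(iii) we get $d_B=(x'_1-y'_1)+\sum_{i\ge 2}(x'_i-y'_i)\le x'_1-y'_1\le x'_1$. Hence the synthetic pair $(\bar x_B,\bar y_B):=(x'_1,\;x'_1-d_B)$ satisfies $0\le\bar y_B\le\bar x_B\le\mu$ and $|\bar x_B-\bar y_B|=d_B\le(1-\eps)\mu$, so it is a genuine $((1-\eps),\mu)$-pair with the same net value $d_B$ and the same $x$-height $x'_1$ as $B$.

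Next I would apply Lemma~\ref{thm:batch_thm} to the collection consisting of all small batches together with one synthetic pair per large batch. This produces a feasible alternating sequence of these units — in which each unit appears as two consecutive jobs, its $x$-job followed by its $y$-job — with maximum stock size less than $(1+(1-\eps))\mu=(2-\eps)\mu$. Finally I would expand each synthetic pair $(\bar x_B,\bar y_B)$ in place, replacing the two consecutive entries $\bar x_B,\bar y_B$ by the real block $x'_1,y'_1,\dots,x'_\ell,y'_\ell$ of $B$. Writing $L$ for the prefix level just before this unit, feasibility of the Lemma~\ref{thm:batch_thm} output gives $L\ge 0$, and the two checkpoints contributed by the synthetic pair are $L+\bar x_B=L+x'_1$ and $L+d_B$, both nonnegative and both $<(2-\eps)\mu$. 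By Lemma~\ref{lemma: alt_batch}, the expanded block, started from height $L$, stays in $[L,\;L+x'_1]$ throughout and ends at $L+d_B$; it also begins with an $x$-job and ends with a $y$-job, so alternation is preserved and nothing outside the block changes. Thus the expansion neither breaks feasibility nor raises the maximum above $(2-\eps)\mu$, which proves the lemma.

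I expect the main obstacle to be the expansion step: one has to be careful that the guarantees of Lemma~\ref{lemma: alt_batch} are stated relative to the block's \emph{own} starting height, so that gluing the block into the global sequence at offset $L$ keeps every intermediate prefix in $[L,\;L+x'_1]\subseteq[0,(2-\eps)\mu)$ — in particular that the peak of the expanded block is exactly the level $L+x'_1=L+\bar x_B$ already charged by its synthetic pair. The other point that must be gotten right is the inequality $d_B\le x'_1$, since this is what makes $\bar y_B$ a legitimate (nonnegative, at most $\mu$) job value; it is precisely why properties (ii) and (iv) were built into the definition of a large batch.
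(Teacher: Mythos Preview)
Your proposal is correct and takes a genuinely different route from the paper. The paper does \emph{not} reduce to Lemma~\ref{thm:batch_thm} as a black box; instead it reopens the breakpoint argument from~\cite{kellerer1998stock} and reproves it with one change: inside a large batch the internal order is given by Lemma~\ref{lemma: alt_batch} rather than ``all $x$'s before all $y$'s''. A key step in that direct argument is the observation that every batch with nonpositive net value is necessarily \emph{small} (two jobs), which is what allows the inequality $S_i + x(B^-) < T$ to go through. Your reduction sidesteps all of this: by replacing each large batch $B$ with the synthetic pair $(x'_1,\,x'_1-d_B)$ having the same net value and the same peak height, you hand Lemma~\ref{thm:batch_thm} a collection of honest $((1-\eps),\mu)$-pairs, and the expansion step is then a clean local substitution that neither changes the surrounding prefix levels nor raises the peak. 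This is more modular and arguably cleaner; the paper's approach has only the minor advantage of never introducing synthetic job values (your $\bar y_B$ may be zero and is generally not in $Y$, which is harmless for the proof of Lemma~\ref{thm:batch_thm} but worth a one-line remark). One small slip in your write-up: the inequality $d_B\le x'_1$ follows from properties~(ii) and~(iii), not (ii) and~(iv); property~(iv) is what you are invoking implicitly through Lemma~\ref{lemma: alt_batch} to bound the peak of the expanded block.
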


\begin{proof}
We will show that the proof of Lemma 3 in \cite{kellerer1998stock} can
be modified to prove our lemma.  
(This proof is almost identical to that in \cite{kellerer1998stock},
but since we need to make subtle changes, we include it in its
entirety here for the sake of completeness.)
Let us set $q=(1-\eps)$ and $T=\mu$.
The only difference will be that inside the large alternating batches,
we will not always sequence all of the $x$'s before all of the $y$'s,
but we instead use the algorithm for sequencing an alternating batch
that was given in Lemma \ref{lemma: alt_batch}.

We sort all of the $q$-alternating batches based on the value of
$x(B)-y(B)$ in nondecreasing order into a sequence ${\mathscr B}$.
Let us begin with the empty list $L^S$ and with the current stock size~$S$
set to zero.  We repeat the following step until ${\mathscr B}$ is
empty:

``Find the first batch $B$ in ${\mathscr B}$ such that $S+x(B)-y(B)
\geq 0$ and set $S := S + x(B) -y(B)$.  Append $B$ to $L^S$ and remove
it from ${\mathscr B}$.''  (That such a batch $B$ exists follows from the facts that
$S \geq 0$ and that total value of the $x$- and $y$-jobs is zero.)

Afterward, we sequence each large alternating batch
$B$ according to Lemma \ref{lemma: alt_batch}, and each small
alternating batch by simply placing the $x$-job before the $y$-job.

Since the sum of all $x(B)-y(B)$ is zero, and the stock size never
goes below zero, each time a batch with positive $x(B)-y(B)$ is
chosen, there exists at least one unsequenced batch with negative
$x(B)-y(B)$.  To prove the upper bound $(1+q)T$ on the maximum,
\cite{kellerer1998stock} introduce the notion of {\em breakpoints},
which fulfill the following two conditions: (a) at each breakpoint,
the current stock size $S$ is less than $qT$, and (b) between any two
consecutive breakpoints, $S$ remains below $(1+q)T$.  Obviously, if
the breakpoints cover the whole time period, this will prove the
lemma.

The first break point is at time zero; the other breakpoints are the
time points just before a batch $B$ with positive $x(B)-y(B)$ is
started.  The last breakpoint is defined to be just after the last
batch.  

The first breakpoint and the last one fulfill condition (a) by
definition.  If one of the other breakpoints would not fulfill
condition (a), then $S \geq qT$ must hold, and because of property (1)
in Definition \ref{def:alter_batch}, our algorithm would have chosen a
batch $B$ with negative $x(B)-y(B)$ as the next batch.  Thus, all of
the breakpoints fulfill the condition that $S < qT$.

Now we need to consider the values of $S$ between two consecutive
breakpoints.  Let us consider two consecutive breakpoints $BP_i$ and
$BP_{i+1}$.  
Recall that all batches $B^-$ with
nonpositive $x(B^-)-y(B^-)$ have only two jobs.  
Since, at time $BP_i$, a batch $B^+$ with positive
$x(B^+)-y(B^+)$ is started, it follows that for each batch $B^-$, the
inequality
\begin{eqnarray}
S_i + x(B^-) < T \label{less_than_T}
\end{eqnarray}
holds.  Otherwise, $S_i + x(B^-)-y(B^-) \geq 0$, because $y(B^-)$ is a
single job and is therefore at most~$T$.

After batch $B^+$ is appended to $L^S$, the current stock size
increases to $S_i + x(B^+)-y(B^+) \leq S_i + qT$.  If batch $B^+$
contained only two jobs, then in between the stock size is at most
$S_i + x(B^+) < qT+T$.  If $B^+$ is a large alternating batch, then by
Lemma \ref{lemma: alt_batch}, the highest point after $S_i$ is at most
$S_i + x_1' < qT + T$.  Either the next batch again has positive
$x(B)-y(B)$ (then we have made it to the next breakpoint) or there
follows a sequence of (small) batches with nonpositive $x(B)-y(B)$.
The stock size within any of these batches $B^-$ always remains below
\begin{eqnarray*}
S_i + x(B^+) - y(B^+) + x(B^-) = S_i + x(B^-) + \left(x(B^+)
-y(B^+)\right) < T + qT,
\end{eqnarray*}
because of inequality \eqref{less_than_T} and because $B^+$ is a
$q$-alternating batch.  After each of these batches, the stock size does
not increase.  This shows that condition (b) holds for any two
consecutive breakpoints, and the proof of the lemma is complete.
\end{proof}

\subsection{Alternating Batches: Construction}\label{sec:last_case}

In this section, we present the final tool required for our algorithm.
Suppose that for some some $\eps: 0 \leq \eps \leq 1$, the following
conditions hold for an input instance to the alternating stock size
problem:
\begin{itemize}
\item $\alpha_1 > (1-\eps)\mu$,
\item $LB(C) < \frac{2}{2-\eps}\mu,$ for $C=(1-\eps)\mu$. 
\end{itemize}
Then, we claim, there is some value of $\eps$ (to be determined later)
for which these two conditions can be used to partition the input
into $(1-\eps)$-alternating batches, to which we can then apply Lemma
\ref{lemma : alt_batch_alg}.  In this section, we will heavily rely on
the notation introduced in Section \ref{sec:lower_bound}.

The sets $A'=\{a_1', \dots, a_{n_a-n_b}'\}$ and $W'=\{w_1',\dots,
w_{n_a-n_b}'\}$ contain exactly the pairs in $M^\star$ that are split
by barrier $C$.  Let $s$ be the smallest index such that $w'_s <
\epsilon\mu$.  To see that such an $s$ actually exists, we note the
following.  Let $i^\star$ denote the index such that $x_{i^\star} -
y_{i^\star} = \alpha_1$.  Then $y_{i^\star} < \eps\mu$ and the
pair $(x_{i^\star}, y_{i^\star})$ is split by $C$.  Thus,
$y_{i^\star}$ corresponds to some $w_{i'}'$, and therefore $s \leq
i'$.  See Figure \ref{fig:batches_figure} for a schematic drawing.

For $i$ in $\{1,\dots,n_a-n_b\}$, we
define $\alpha'_i = a'_i - w'_i$ and for $j$ in $\{1,\dots,h\}$,
$\beta'_j = w_j-v_j$.  (Recall that for $j \in \{1,\dots,h\}$,
$w_j-v_j >0$.)  Furthermore, let $\mathcal{A}_i$ denote the pair
$\{a_i',w_i'\}$ and let $\mathcal{B}_j$ denote the pair $\{v_j,w_j\}$.
Since $w_s' < \eps\mu$, it follows that all $w_i$'s in $W$ also have
value less than $\eps\mu$.  Moreover, $\beta_j' < \eps \mu$ for $j
\in \{1, \dots, h\}$.

\begin{figure}\centering
\begin{tikzpicture}
\begin{scope}[shift={(0,0)}]
\draw (0,0) node {$a_1$};
\draw (0.6,0) node {$\cdots$};
\draw (1.2,0) node {$a_{n_b}$};

\draw (0,-2) node {$b_1$};
\draw (0.6,-2) node {$\cdots$};
\draw (1.2,-2) node {$b_{n_b}$};
\end{scope}

\begin{scope}[shift={(2.0,0)}]
\draw (0,0.8) node {$a_{n_b+1}$};
\draw (0,0.4) node {$=$};
\draw (0,0) node {$a'_{1}$};
\draw (0.6,0) node {$\cdots$};
\draw (1.3,0) node {$a'_{s}$};
\draw (2.0,0) node {$\cdots$};
\draw (2.6,0) node {$a'_{i'}$};
\draw (3.2,0) node {$\cdots$};
\draw (4.1,0.8) node {$a_{n_a}$};
\draw (4.1,0.4) node {$=$};
\draw (4.1,0) node {$a'_{n_a-n_b}$};

\draw (0,-2) node {$w'_{1}$};
\draw (0.6,-2) node {$\cdots$};
\draw (1.3,-2) node {$w'_{s}$};
\draw (2.0,-2) node {$\cdots$};
\draw (2.6,-2) node {$w'_{i'}$};
\draw (3.2,-2) node {$\cdots$};
\draw (4.1,-2) node {$w'_{n_a-n_b}$};
\end{scope}

\begin{scope}[shift={(7.2,0)}]
\draw (0,0) node {$v_k$};
\draw (0.6,0) node {$\cdots$};
\draw (1.3,0) node {$v_{h+1}$};
\draw (2.2,0) node {$v_{h}$};
\draw (3.0,0) node {$\cdots$};
\draw (4,0) node {$v_1$};

\draw (0,-2) node {$w_1$};
\draw (1.0,-2) node {$\cdots$};
\draw (1.7,-2) node {$w_h$};
\draw (2.6,-2) node {$w_{h+1}$};
\draw (3.4,-2) node {$\cdots$};
\draw (4,-2) node {$w_k$};
\end{scope}

\draw [decorate,decoration={brace,amplitude=10pt}]
(-0.2,1.2) -- (6.6,1.2) node [black,midway,xshift=-0.1cm,yshift=0.6cm] {\footnotesize$\ge (1-\epsilon)\mu$};

\draw [decorate,decoration={brace,amplitude=10pt}]
(1.5,-2.5) -- (-0.2,-2.5) node [black,midway,xshift=-0.1cm,yshift=-0.7cm] {\footnotesize$\ge (1-\epsilon)\mu$};

\draw [decorate,decoration={brace,amplitude=10pt}]
(11.5,-2.5) -- (3.0,-2.5) node [black,midway,xshift=-0.1cm,yshift=-0.7cm] {\footnotesize$< \epsilon \mu$};

\draw (2.0,-0.4) -- (2.0,-1.6);
\draw (2.3,-1) node {\footnotesize$\alpha_1'$};
\draw (3.3,-0.4) -- (3.3,-1.6);
\draw (3.6,-1) node {\footnotesize$\alpha_s'$};
\draw (4.6,-0.4) -- (4.6,-1.6);
\draw (5.1,-0.8) node {\footnotesize$\alpha_{i'}'$};
\draw (5.1,-1.2) node {\footnotesize$=\alpha_1$};
\draw (6.1,-0.4) -- (6.1,-1.6);
\draw (6.7,-1) node {\footnotesize$\alpha_{n_a-n_b}'$};

\draw (7.2,-1.6) -- (11.2,-0.4);
\draw (8.3,-1) node {\footnotesize$\beta_1'$};
\draw (8.9,-1.6) -- (9.4,-0.4);
\draw (9.0,-0.6) node {\footnotesize$\beta_h'$};
\end{tikzpicture}
\caption{An illustration of the various elements used in the
  construction of the lower bound.\label{fig:batches_figure}}
\end{figure}
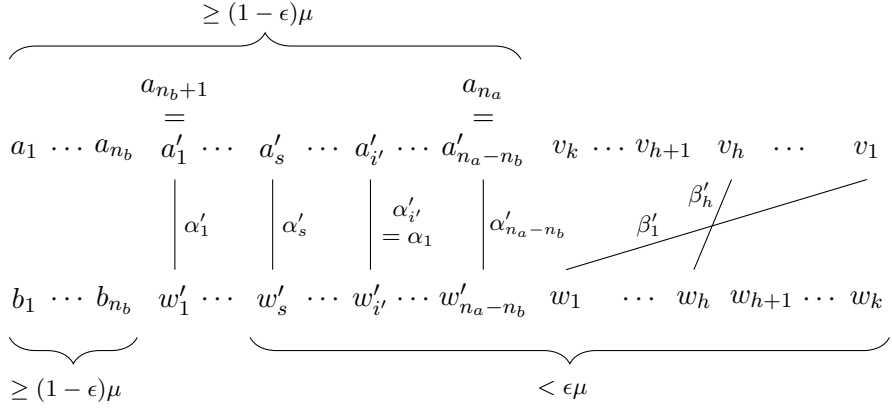

Our goal is now to construct $(1-\eps)$-alternating batches.  For each
$i \in \{1, \dots, s-1\}$, note that $\alpha_i' \leq (1-\eps)\mu$.  The
set $\mathcal{A}_i$ therefore forms a small $(1-\eps)$-alternating
batch.  For each $\mathcal{A}_i$ where $i \in \{s, \dots, n_a-n_b\}$,
we will find a set of $\mathcal{B}_j$'s that can be grouped with this
$\mathcal{A}_i$ to create a large $(1-\eps)$-alternating batch.
However, to do this, we require that the condition on $\eps$ found in
Claim~\ref{claim:eps} be satisfied.
\begin{claim}\label{claim:eps}
For $\eps \leq .219$, the following inequality is satisfied.
\begin{eqnarray*}
2(1-\epsilon) - \frac{2}{2-\epsilon} > 2\epsilon.
\end{eqnarray*}
\end{claim}

\begin{lemma}\label{lem:condition}
If $LB(C) < 2\mu/(2-\eps)$, $C = (1-\eps)\mu$, and
$2(1-\epsilon) - \frac{2}{2-\epsilon} > 2\epsilon$, then
  $\sum_{i=1}^h \beta'_i + \sum_{i=s}^{n_a-n_b}w'_i > 2\epsilon\mu(n_a-n_b-s+1)$.
\end{lemma}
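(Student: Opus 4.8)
The plan is to substitute the definitions of $\alpha'_i$, $\beta'_i$ and the barrier value $C=(1-\eps)\mu$ directly into the formula for $LB(C)$ from \lemmaref{lem:lower_bound}, and then read off the claimed inequality; the argument is essentially a single rearrangement followed by two crude termwise bounds. Write $N = n_a-n_b-s+1$ for the number of summands. By \lemmaref{lem:lower_bound},
\[
   N\cdot LB(C) = 2\sum_{i=s}^{n_a-n_b} a'_i \;-\; \sum_{i=s}^{n_a-n_b} w'_i \;+\; \sum_{i=1}^h (v_i - w_i).
\]
Since $\beta'_j = w_j - v_j$ for $j\in\{1,\dots,h\}$, the last sum is exactly $-\sum_{i=1}^h\beta'_i$, so rearranging yields the identity
\[
   \sum_{i=1}^h \beta'_i + \sum_{i=s}^{n_a-n_b} w'_i \;=\; 2\sum_{i=s}^{n_a-n_b} a'_i \;-\; N\cdot LB(C).
\]
This identity is the crux of the proof; everything after it is bounding the two terms on the right.

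For the first term, I would use that each $a'_i$ with $i\in\{s,\dots,n_a-n_b\}$ belongs to $A'$ and hence lies on the ``big'' side of the barrier, i.e.\ $a'_i \ge C = (1-\eps)\mu$ by \eqref{barrier1}; summing gives $2\sum_{i=s}^{n_a-n_b} a'_i \ge 2N(1-\eps)\mu$. For the second term, the hypothesis $LB(C) < 2\mu/(2-\eps)$ gives $N\cdot LB(C) < \tfrac{2}{2-\eps}\mu N$. Plugging both estimates into the identity gives
\[
   \sum_{i=1}^h \beta'_i + \sum_{i=s}^{n_a-n_b} w'_i \;>\; 2N(1-\eps)\mu - \frac{2}{2-\eps}\mu N \;=\; N\mu\left(2(1-\eps) - \frac{2}{2-\eps}\right),
\]
and then the third hypothesis $2(1-\eps) - \tfrac{2}{2-\eps} > 2\eps$ turns the right-hand side into at least $2\eps\mu N = 2\eps\mu(n_a-n_b-s+1)$, which is the desired conclusion.

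I do not expect a genuine obstacle here: the only points requiring care are the sign bookkeeping when substituting $\beta'_i = w_i - v_i$ into the $LB(C)$ formula, and correctly invoking that the $a'_i$ in the range $\{s,\dots,n_a-n_b\}$ are precisely $x$-jobs sitting above the barrier (so that the termwise bound $a'_i\ge(1-\eps)\mu$ is legitimate). Once the rearranged identity in the first paragraph is established, the chain of inequalities is immediate and uses each of the three hypotheses exactly once.
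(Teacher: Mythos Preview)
Your proof is correct and follows essentially the same route as the paper: both start from the formula for $LB(C)$ in \lemmaref{lem:lower_bound}, substitute $\beta'_i=w_i-v_i$, use $a'_i\ge(1-\eps)\mu$ and $LB(C)<2\mu/(2-\eps)$, and then invoke the condition on $\eps$. The only cosmetic difference is that you first isolate the identity $\sum\beta'_i+\sum w'_i = 2\sum a'_i - N\cdot LB(C)$ before bounding, whereas the paper substitutes the bounds directly into the $LB(C)$ inequality; the logic is identical.
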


\begin{proof}
By the first assumption in the statement of the Lemma, we have
\begin{eqnarray*}
LB(C) = \left( 2 \sum_{i=s}^{n_a-n_b}{a'_i} - \sum_{i=s}^{n_a-n_b}{w'_i} +
\sum_{i=1}^h (v_i - w_i)\right) \cdot \frac{1}{n_a-n_b-s+1} <
\frac{2\mu}{2-\epsilon}.
\end{eqnarray*}
Since
$C = (1-\epsilon)\mu$ and each $a_i' \geq C$, we also have:
$$\sum_{i=s}^{n_a-n_b}{a'_i} \geq (n_a-n_b-s+1)(1-\epsilon)\mu.$$
Rearranging and multiplying each side by 2, we have
\begin{eqnarray*}
\frac{2 \sum_{i=s}^{n_a-n_b}{a'_i}}{(n_a-n_b-s+1)} \geq
2(1-\epsilon)\mu.
\end{eqnarray*}
Therefore,
\begin{eqnarray*}
2(1-\epsilon)\mu - \left( \sum_{i=s}^{n_a-n_b}{w'_i} + \sum_{i=1}^h
\beta'_i\right) \cdot \frac{1}{n_a-n_b-s+1} \leq LB(C) < 2\mu/(2-\epsilon).
\end{eqnarray*}
By the condition on $\epsilon$, we have
\begin{eqnarray*}
2 \eps \mu < \left(2(1-\epsilon) - \frac{2}{2-\epsilon} \right) \mu
< \left( \sum_{i=s}^{n_a-n_b}{w'_i} + \sum_{i=1}^h
\beta'_i\right) \cdot \frac{1}{n_a-n_b-s+1}.
\end{eqnarray*}
We can conclude that
$\sum_{i=1}^h \beta'_i + \sum_{i=s}^{n_a-n_b}w'_i >
2\epsilon\mu(n_a-n_b-s+1)$.
\end{proof}

For ease of notation, we set $d=n_a-n_b-s+1$.  In the following lemma,
we show that we can also construct a $(1-\eps)$-alternating batch for
each $\mathcal{A}_i$ for $i \in [s,n_a-n_b]$.

\begin{lemma}\label{lem:special_batch}
For~$\eps = .21$, there exists $d$ disjoint subsets $S_1, \dots, S_d$
of $\{\mathcal{B}_1,\dots,\mathcal{B}_h\}$ such that for all $i$ in
$\{1,\dots,d\}$, the set $S_i \cup \mathcal{A}_{i+s-1}$ is a
$(1-\eps)$-alternating batch.
\end{lemma}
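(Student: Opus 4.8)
The plan is to assemble each batch by listing the pair $\mathcal{A}_{i+s-1}$ first and then appending a suitable subcollection $S_i\subseteq\{\mathcal{B}_1,\dots,\mathcal{B}_h\}$, with the pairs of $S_i$ written in increasing index order. First I would check that, with this internal arrangement, conditions (ii)--(iv) required by Lemma~\ref{lemma: alt_batch} come essentially for free: the leading pair satisfies $x_1'-y_1'=a_{i+s-1}'-w_{i+s-1}'=\alpha_{i+s-1}'\ge 0$, since $a_{i+s-1}'\ge C>w_{i+s-1}'$, giving (ii); each $\mathcal{B}_j$ with $j\le h$ contributes a pair with $v_j-w_j\le 0$, giving (iii); and because the sorted order in~\eqref{barrier2} forces $w_{i+s-1}'\ge w_{n_a-n_b}'\ge w_1\ge w_2\ge\dots\ge w_h$, the leading $y$-value $w_{i+s-1}'$ dominates all the remaining $y$-values, which themselves appear decreasingly, giving (iv). Writing $T_i:=\alpha_{i+s-1}'=a_{i+s-1}'-w_{i+s-1}'$ and $\Sigma_i:=\sum_{\mathcal{B}_j\in S_i}\beta_j'$, the net excess of $x$- over $y$-value in the batch $S_i\cup\mathcal{A}_{i+s-1}$ equals $T_i-\Sigma_i$, so both condition (i) and condition (1) of Definition~\ref{def:alter_batch} hold exactly when
\[
   T_i-(1-\eps)\mu \;\le\; \Sigma_i \;\le\; T_i .
\]
(If $\Sigma_i$ lies in this window and $S_i=\emptyset$, the batch is a legitimate small batch; if $S_i\ne\emptyset$, a large one.) Thus everything reduces to partitioning a subset of $\{\mathcal{B}_1,\dots,\mathcal{B}_h\}$ into disjoint sets $S_1,\dots,S_d$ with each $\Sigma_i$ in its window.

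I would construct the $S_i$ greedily from a single shared pool, initially equal to $\{\mathcal{B}_1,\dots,\mathcal{B}_h\}$: process $i=1,\dots,d$, and for each $i$ with $T_i>(1-\eps)\mu$ keep moving pairs out of the pool into $S_i$ until $\Sigma_i\ge T_i-(1-\eps)\mu$, while for $i$ with $T_i\le(1-\eps)\mu$ simply take $S_i=\emptyset$ (which lies in its window because $T_i\le(1-\eps)\mu$ and $T_i\ge 0$). The upper endpoint $T_i$ is then automatically respected: each $\beta_j'<\eps\mu$ (as noted just before the statement) and $\eps\mu\le(1-\eps)\mu$ because $\eps=.21<\tfrac12$, so the pair that pushes $\Sigma_i$ past the lower threshold overshoots it by less than $\eps\mu$, leaving $\Sigma_i<T_i-(1-2\eps)\mu\le T_i$.

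The only real obstacle is to show the pool never empties before some $S_i$ reaches its threshold. For this I would combine two estimates. Using $a_{i+s-1}'\le\mu$ together with $w_{i+s-1}'<\eps\mu$ (valid because $i+s-1\ge s$), one gets $\max\{T_i-(1-\eps)\mu,\,0\}\le\eps\mu-w_{i+s-1}'$, and summing over $i\in\{1,\dots,d\}$,
\[
   \sum_{i=1}^d \max\{T_i-(1-\eps)\mu,\,0\} \;\le\; \eps\mu d-\sum_{i=s}^{n_a-n_b}w_i'.
\]
On the other hand, Lemma~\ref{lem:condition} (which applies here since $\eps=.21$ satisfies Claim~\ref{claim:eps}) gives $\sum_{i=1}^h\beta_i'>2\eps\mu d-\sum_{i=s}^{n_a-n_b}w_i'$; subtracting yields the decisive inequality
\[
   \sum_{i=1}^h\beta_i' \;>\; \eps\mu d+\sum_{i=1}^d \max\{T_i-(1-\eps)\mu,\,0\}.
\]
Now suppose the greedy procedure got stuck while filling some $S_{i_0}$. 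Then the entire pool has been used, with every completed $S_i$ ($i<i_0$) consuming at most $\max\{T_i-(1-\eps)\mu,0\}+\eps\mu$ (a full window plus at most one overshoot) and the unfinished $S_{i_0}$ consuming strictly less than $\max\{T_{i_0}-(1-\eps)\mu,0\}$; summing these contributions gives $\sum_{i=1}^h\beta_i'<\sum_{i=1}^d\max\{T_i-(1-\eps)\mu,0\}+\eps\mu d$, contradicting the decisive inequality. Hence each $S_i$ is completed with $\Sigma_i$ in its window, so by the first paragraph each $S_i\cup\mathcal{A}_{i+s-1}$ is a $(1-\eps)$-alternating batch, and the lemma follows.
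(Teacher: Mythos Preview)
Your argument is correct and follows essentially the same plan as the paper: verify that conditions (ii)--(iv) come for free from the ordering, then greedily fill each $S_i$ from the common pool $\{\mathcal{B}_1,\dots,\mathcal{B}_h\}$, using Lemma~\ref{lem:condition} to certify that the pool is large enough. The only difference is bookkeeping: the paper targets the fixed window $f(S_p)\in[\eps\mu-w'_{s+p-1},\,2\eps\mu-w'_{s+p-1})$ and maintains the inductive invariant $f(B_p)+\sum_{t\ge s+p-1}w'_t>2\eps\mu(d-p+1)$, whereas you target the natural window $[T_i-(1-\eps)\mu,\,T_i]$ directly and replace the invariant by a one-shot global budget comparison (total supply $\sum\beta'_j$ versus total demand $\sum\max\{T_i-(1-\eps)\mu,0\}$ plus the overshoot budget $d\eps\mu$).
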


\begin{proof}
Our goal is to show that a set of $\mathcal{B}_j$'s can
be assigned to each $\mathcal{A}_i$ so that the total value of the
corresponding set of elements is at most $(1-\eps)\mu$.  
This will 
imply that condition (1)
in Definition~\ref{def:alter_batch} holds.  Note that conditions (ii),
(iii) and (iv) hold for any set $S_i \cup \mathcal{A}_{i+s-1}$.  We will also
show that (i) holds for the batches we construct.

For a subset
$S$ of $\{\mathcal{B}_1,\dots,\mathcal{B}_h\}$, let $f(S)$ denote
the sum of the weight of the elements in $S$.  We will show
that for each $i \in \{1, \dots, d\}$, we can find a disjoint set $S_i$ such
that $f(S_i)$ lies in the interval $[\eps \mu - w_{i+s-1}, ~
2 \eps \mu - w_{i+s-1}]$.  This will imply that the set 
$S_i \cup \mathcal{A}_{i+s-1}$ has value in the interval $[0,
  (1-\eps)\mu]$ and is therefore a $(1-\eps)$-alternating batch.
Our algorithm is simply to form a set of the next
available (i.e. unused)
$\mathcal{B}_j$'s until their sum lies in the desired interval.

For $p$ in $\{1,\dots,d\}$, let $B_p = \{\mathcal{B}_1,\dots,\mathcal{B}_h\}
\setminus \cup_{k=1}^{p-1}S_k$ and recall that $f(B_p)$ denotes the sum of the
weight of the elements that are in $B_p$.  As we construct the sets
$S_i$, we will show at each step $p$ that the following hypothesis
holds: $f(B_p) + \sum_{t=s+p-1}^{n_a-n_b}w'_t >
2\epsilon\mu(d-(p-1))$, and we have formed $p-1$ sets $S_1, \dots,
S_{p-1}$ such that each $S_k \cup A_{k+s-1}$ is a
$(1-\eps)$-alternating batch for $k \in \{1, \dots, p-1\}$.

When $p=1$, the hypothesis is given by Lemma \ref{lem:condition}.
Now, let's assume that we have made $p-1$ sets for $p-1<d$. If
$\alpha'_{p+s-1} \leq (1-\epsilon)\mu$, we set $S_{p} = \emptyset$ and the
required inequality still holds.  Otherwise, let $S_{p}$ be a subset
of $B_{p}$ such that $\epsilon\mu - w'_{s+p-1} \leq f(S_{p}) <
2\epsilon\mu - w'_{s+p-1}$.  Such a subset exists because all the
elements of $B_{p}$ are at most $\epsilon\mu$, and because 
$f(B_p)+w'_{s+p-1} \geq \epsilon\mu$, which follows from the induction hypothesis.
The induction hypothesis implies that $f(B_p) +
\sum_{t=s+p-1}^{n_a-n_b}w'_t > 2\epsilon\mu(d-(p-1))$. Furthermore,
$w'_t\le \epsilon\mu$ for all~$t$. Therefore,
\begin{align*}
   f(B_p) + w'_{s+p-1} & > 2\epsilon\mu(d-(p-1)) - \sum_{t=s+p}^{n_a-n_b}w'_t\\
   & \ge 2\epsilon\mu(d-(p-1)) - \sum_{t=s+p}^{n_a-n_b}\epsilon\mu\\
   & = 2\epsilon\mu(d-(p-1)) - (n_a-n_b-s-p+1)\epsilon\mu\\
   & = 2\epsilon\mu(d-(p-1)) - (d-p)\epsilon\mu\\
   & = \epsilon\mu(d-p+2) \ge \epsilon\mu.
\end{align*}

Then $f(B_{p+1}) > f(B_p) - 2\epsilon\mu + w'_{s+p-1} >
2\epsilon\mu(d-(p-1)) - \sum_{t=s+p-1}^{n_a-n_b}w'_t - 2\epsilon\mu
+ w'_{s+p-1}$, and $f(B_{p+1}) + \sum_{t=s+p}^{n_a-n_b}w'_t >
2\epsilon\mu(d-p)$. So the inequality holds at step $p+1$ and we
have constructed the set $S_p$.  This proves the lemma, since for
every~$p$
  \[
     a'_{p+s-1}-(w'_{s+p-1}+f(S_p)) \in [0,(1-\epsilon)\mu],
  \]
  which follows from~$a'_{p+s-1}\in[(1-\epsilon)\mu,\mu]$ and~$w'_{s+p-1}+f(S_p)\in[\epsilon\mu,2\epsilon\mu]$.
\end{proof} 

Now we want to complete the construction of the $(1-\eps)$-alternating
batches, so that we can apply Lemma \ref{lemma : alt_batch_alg}.
For the sets $\mathcal{A}_i$, where $i \in \{s, \dots, n_a-n_b\}$, we
construct batches according to Lemma \ref{lem:special_batch}.  Let
$y_{i^*} = w_{s}'$.  For all $i < i^*$, the pair $(x_i,y_i)$
forms a small $(1-\eps)$-alternating batch.  This follows from the fact
that for all $i < i^*$, $y_{i} \geq \eps \mu$, by definition of $s$.
Finally, if there are remaining elements, they are $v_i$'s and
$w_i$'s, which can be paired up arbitrarily to construct more small
$(1-\eps)$-alternating batches, since each remaining
$v_i$ has value strictly less than $(1-\epsilon)\mu$ due to our choice of
barrier, and each remaining $w_i$ has value at most $\epsilon\mu$.

Since the only limits on the value of $\eps$ are imposed by Lemma 
\ref{lem:condition}, we can set $\eps=.21$ and partition the input
into .79-alternating batches.

\subsection{A {1.79}-Approximation Algorithm}

We are now ready to present an algorithm for the alternating stock
size problem with an approximation guarantee of $1.79$.
  \begin{algorithm}[H]
    \caption{$1.79$-approximation}
    \begin{algorithmic}[1]
    \State {\bf Input:} the sets $X$ and $Y$ of positive numbers sorted in nonincreasing order.
    \State {\bf Output:} a sequence that is a $1.79$-approximation.
    \State Set $\eps=.21, C = (1-\eps)\mu$.
    \State Match each $x_i$ with $y_i$.
    \If {$\alpha_1 \leq (1-\epsilon)\mu$ or if $LB(C) \geq
    \frac{2}{2-\eps}\mu$}
        \State \Return solution for the Pairing Algorithm with
        guarantee of most $\mu + \alpha_1$.
        \Else 
            \State Partition the input into $(1-\eps)$-alternating
            batches as described in Section \ref{sec:last_case}.
            \State Run the algorithm from Lemma \ref{lemma : alt_batch_alg} on the
$(1-\eps)$-alternating batches.
            \EndIf 
  \end{algorithmic}
  \end{algorithm}

\begin{theorem}\label{thm:alternating}
Algorithm 1 is a $1.79$-approximation for the alternating stock size problem. 
\end{theorem}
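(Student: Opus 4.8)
The plan is to verify that each of the two branches of Algorithm~1 outputs a feasible alternating sequence of maximum stock size at most $1.79\,S^*$, by playing the two available lower bounds against one another: $S^* \ge \mu$ (the same argument as for the stock size problem shows $\mu_x,\mu_y \le S^*$) and $S^* \ge LB(C)$ for the barrier $C=(1-\eps)\mu$ with $\eps=.21$, the latter being Lemma~\ref{lem:lower_bound} (valid whenever $n_a>n_b$). Throughout we keep the standing assumption $\max\{\alpha_1,\beta_1\}=\alpha_1$, so in particular $\alpha_1\le\mu$.

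First I would dispatch the branch taken when $\alpha_1\le(1-\eps)\mu$ or $LB(C)\ge\frac{2}{2-\eps}\mu$. Here the algorithm returns the output of the Pairing Algorithm on $M^\star$, which by Lemma~\ref{lem:alpha-beta} minimizes $\alpha_1$ and, through Lemma~\ref{thm:batch_thm}, yields a feasible alternating sequence of value at most $\mu+\alpha_1$. If $\alpha_1\le(1-\eps)\mu$, then $\mu+\alpha_1\le(2-\eps)\mu=1.79\,\mu\le 1.79\,S^*$. If instead $LB(C)\ge\frac{2}{2-\eps}\mu$, then using only $\alpha_1\le\mu$ we get $\mu+\alpha_1\le 2\mu\le(2-\eps)\,LB(C)=1.79\,LB(C)\le 1.79\,S^*$, the middle step being the rearrangement of $\frac{2}{2-\eps}\mu\le LB(C)$ together with $(2-\eps)\cdot\frac{2}{2-\eps}=2$.

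Next I would handle the complementary branch, in which $\alpha_1>(1-\eps)\mu$ and $LB(C)<\frac{2}{2-\eps}\mu$ with $C=(1-\eps)\mu$. These are exactly the hypotheses imposed on the input at the start of Section~\ref{sec:last_case}: Claim~\ref{claim:eps} certifies the numeric inequality $2(1-\eps)-\frac{2}{2-\eps}>2\eps$ at $\eps=.21$; Lemma~\ref{lem:condition} then yields the mass bound $\sum_{i=1}^h\beta'_i+\sum_{i=s}^{n_a-n_b}w'_i>2\eps\mu(n_a-n_b-s+1)$; Lemma~\ref{lem:special_batch} uses this mass to attach a subset of the pairs $\mathcal{B}_j$ to each $\mathcal{A}_{i}$ with $i\in[s,n_a-n_b]$, forming a large $(1-\eps)$-alternating batch; and the closing paragraph of that section packs every remaining job into small $(1-\eps)$-alternating batches. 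Feeding this partition to Lemma~\ref{lemma : alt_batch_alg} produces a feasible alternating sequence of maximum stock size strictly below $(2-\eps)\mu=1.79\,\mu\le 1.79\,S^*$.

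Since the branch condition is an explicit disjunction, the two cases are exhaustive, and combining them proves the theorem. I expect the only delicate points to be bookkeeping rather than substantive: checking that $LB(C)$ is meaningful when it is invoked — when $n_a=n_b$ every matched difference satisfies $x_i-y_i\le\max\{\mu-C,\,C\}=(1-\eps)\mu$ (for $i\le n_a$ one has $y_i\ge C$, and for $i>n_a$ one has $x_i<C$), so $\alpha_1\le(1-\eps)\mu$ and the first branch is forced — and the arithmetic $2-\eps=1.79$ together with $(2-\eps)\cdot\frac{2}{2-\eps}=2$, which is precisely what lets the worst case $2\mu$ of the Pairing Algorithm be absorbed by $LB(C)$ exactly on the region where the batch construction of Section~\ref{sec:last_case} is unavailable. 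No new idea is needed beyond the lemmas already established; the theorem is their assembly.
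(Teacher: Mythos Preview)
Your proposal is correct and follows essentially the same three-case split as the paper's proof: (i) $\alpha_1\le(1-\eps)\mu$ gives $\mu+\alpha_1\le(2-\eps)\mu$; (ii) $LB(C)\ge\frac{2}{2-\eps}\mu$ makes the $2\mu$ bound from the Pairing Algorithm a $(2-\eps)$-approximation; (iii) otherwise the construction of Section~\ref{sec:last_case} together with Lemma~\ref{lemma : alt_batch_alg} applies. Your write-up is in fact more careful than the paper's, since you explicitly verify that the $n_a=n_b$ edge case (where Lemma~\ref{lem:lower_bound} does not apply) forces $\alpha_1\le(1-\eps)\mu$ and hence lands in the first branch.
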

  
\begin{proof}
In the first case, we have $\alpha_1 \leq (1-\epsilon)\mu$. The
algorithm described in Section \ref{sec:alg_pair} therefore gives a
solution whose value is at at most $\mu + \alpha_1 \leq
(2-\epsilon)\mu$, and we know that $\mu$ is a lower bound. 
In the second case, we have $LB(C) \geq 2\mu/(2-\eps)$, in which case an
algorithm with a guarantee of $2\mu$ is a $(2-\eps)$-approximation.
The last case is covered in the proof of Lemma \ref{lemma :
  alt_batch_alg}.
\end{proof}




\section{Gasoline Problem}
\label{section:GasolineProblem}

Let the variable $z_{ij}$ be $1$ if gas station $x_i$ is placed in position
$j$, and be $0$ otherwise.  Then we can formulate the gasoline problem as the following integer linear
program whose solution matrix $Z$ is a permutation matrix.
\begin{align}
 &\min \beta-\alpha \notag\\
&\forall j \in [1,n]: \sum_{i=1}^n z_{ij}  = 1, \quad
\forall i \in [1,n]: \sum_{j=1}^n z_{ij} = 1, \quad \forall i,j \in
        [1,n]: z_{ij}  \in \{0,1\},\notag\\
&\forall k\in\{1,\ldots,n\} : \sum_{j=1}^{k} \sum_{i=1}^n z_{ij}
        \cdot x_i - \sum_{j=1}^{k-1}  y_j  \leq
        \beta, \label{small-objective}\\
&\forall k\in\{1,\ldots,n\} : \sum_{j=1}^{k} \sum_{i=1}^n z_{ij}
        \cdot x_i - \sum_{j=1}^{k}  y_j  \geq
        \alpha. \label{small-objective2}
\end{align}
Observe that~\eqref{small-objective} and~\eqref{small-objective2}
imply that for every interval~$I=[k,\ell]$ the sum of the $x_i$'s
assigned to~$I$ by $Z$ and the sum of the $y_i$'s in~$I$ differ by at
most~$\beta-\alpha$.  If we replace~$z_{ij} \in \{0,1\}$ with the
constraint $z_{ij}\in[0,1]$, then the solution to the linear program,
$Z$, is an $n \times n$ doubly stochastic matrix.  Now we have the
following rounding problem. We are given an $n \times n$ doubly
stochastic matrix~$Z = \{z_{ij}\}$
and we define $z_j$ to be the total fractional value of the $x_i$'s that
are in position~$j$, i.e. $z_j = \sum_{i=1}^n z_{ij}\cdot x_i$.
Our goal is to find a permutation of the $x_i$'s such that the $x_i$
assigned to position~$j$ is roughly equal to $z_j$.

A natural approach would be to decompose $Z$ into a convex combination
of permutation matrices and see if one of these gives a good
permutation of the elements in $X$.  However, consider the following
example:
\begin{eqnarray*}
X & = & \{\underbrace{1,1, \dots, 1}_{n-k \text{ entries
}},\underbrace{B, B, \dots, B}_{k \text{ entries
}}\},\quad \forall i \in [1,n]:~y_i = \gamma = \frac{k\cdot B + n-k}{n}.
\end{eqnarray*}
In this case, $z_j = \gamma$ for all $j \in [1,n]$.  Thus, a possible
decomposition into permutation matrices could look like:
\begin{eqnarray*}
&\{B,B, \dots, B, 1, 1, \dots, 1, 1\}\\
&\{1, B,B, \dots, B, 1,  1,\dots, 1\}\\
&\vdots \\
&\{1, 1, \dots, 1, 1, B, B, \dots, B\}.
\end{eqnarray*}
Each of these permutations has an interval with very large
value, while the optimal permutation of the elements in $X$ is
\begin{eqnarray*}
\{1, 1, \dots 1, B, 1 \dots, 1, B, 1,\dots 1\}.
\end{eqnarray*}

\subsection{Transformation}

Given a doubly stochastic matrix~$Z = \{z_{ij}\}$, we transform it
into a doubly stochastic matrix~$T = \{t_{ij}\}$ with special
properties. First of all, for each~$j$, $z_j = \sum_{i=1}^{n} t_{ij}
\cdot x_i$.  This means that if~$(Z,\alpha,\beta)$ is a feasible
solution to the linear program then~$(T,\alpha,\beta)$ is also a
feasible solution. In particular, if~$Z$ is an optimal solution, for
which~$\beta-\alpha$ is as small as possible, then~$T$ is also
optimal.

We call a row $i$ in a doubly stochastic matrix $A=\{a_{ij}\}$ {\em
  finished} at column~$\ell$ if~$\sum_{j=1}^{\ell} a_{ij} = 1$. 
We say that a matrix $T$ has the {\em consecutiveness property}
if the following holds:
for each column~$j$ and any rows~$i_1$
and~$i_3$ with~$i_1<i_3$, $t_{i_1j}>0$, and~$t_{i_3j}>0$, each
row~$i_2\in\{i_1+1,\ldots,i_3-1\}$ is finished at column~$j$. 

Our procedure to transform the matrix~$Z$ into a matrix~$T$ with the
desired property relies on the following transformation rule. Assume
that there exist indices~$j$, $i_1$, $i_3$,
and~$i_2\in\{i_1+1,\ldots,i_3-1\}$ such that~$z_{i_1j}>0$,
$z_{i_3j}>0$, and row~$i_2$ is not finished in matrix~$Z$ at
column~$j$. Then the procedure {\sc shift} shown as Algorithm~\ref{alg:Shift} computes a column
vector~$a=(a_1,\ldots,a_n)$, which satisfies the following lemma.

\begin{algorithm}
  \caption{{\sc shift}($Z,j,i_1,i_2,i_3,\delta$)}\label{alg:Shift}
  \begin{algorithmic}[1]
    \State $\forall i\in\{1,\ldots,n\}\setminus\{i_1,i_2,i_3\}: a_{i} = z_{ij}$;
    \State $a_{i_2} = z_{i_2j}+\delta$;
      \If {$x_{i_1}=x_{i_3}$}
         \State $a_{i_1}=z_{i_1j}-\delta$; \quad $a_{i_3}=z_{i_3j}$;
      \Else
         \State $a_{i_1}=z_{i_1j}-\delta\cdot\frac{x_{i_2}-x_{i_3}}{x_{i_1}-x_{i_3}}$; \quad
          $a_{i_3}=z_{i_3j}-\delta\cdot\frac{x_{i_1}-x_{i_2}}{x_{i_1}-x_{i_3}}$;
      \EndIf
    \State\Return $a$
  \end{algorithmic}
\end{algorithm}

\begin{lemma}\label{lemma:TransformDoesNotChangeValue}
For any~$\delta\ge 0$, the vector~$a$ returned by {\sc shift}($Z,j,i_1,i_2,i_3,\delta$) satisfies
$\sum_{i=1}^{n} a_{i} \cdot x_i = z_j$.
\end{lemma}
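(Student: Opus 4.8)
The plan is to exploit the fact that the output vector $a$ agrees with the $j$-th column of $Z$ in every coordinate except the three rows $i_1,i_2,i_3$. Since $z_j=\sum_{i=1}^n z_{ij}x_i$, the difference
\[
\Delta \;:=\; \sum_{i=1}^n a_i x_i - z_j
\;=\; (a_{i_1}-z_{i_1j})x_{i_1} + (a_{i_2}-z_{i_2j})x_{i_2} + (a_{i_3}-z_{i_3j})x_{i_3},
\]
so it suffices to show $\Delta=0$, and I would do this by splitting into the two cases of procedure {\sc shift}.

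In the case $x_{i_1}\neq x_{i_3}$, I set $\lambda = \frac{x_{i_2}-x_{i_3}}{x_{i_1}-x_{i_3}}$, so that $1-\lambda = \frac{x_{i_1}-x_{i_2}}{x_{i_1}-x_{i_3}}$ and, by construction of $a$, we have $a_{i_1}-z_{i_1j}=-\delta\lambda$, $a_{i_3}-z_{i_3j}=-\delta(1-\lambda)$, and $a_{i_2}-z_{i_2j}=\delta$. The key algebraic identity is $\lambda x_{i_1}+(1-\lambda)x_{i_3}=x_{i_2}$, i.e.\ $\lambda$ is exactly the coefficient that writes $x_{i_2}$ as an affine combination of $x_{i_1}$ and $x_{i_3}$; this follows by clearing the common denominator $x_{i_1}-x_{i_3}$ in a one-line computation. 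Substituting, $\Delta = -\delta\lambda x_{i_1} + \delta x_{i_2} - \delta(1-\lambda)x_{i_3} = \delta\bigl(x_{i_2}-(\lambda x_{i_1}+(1-\lambda)x_{i_3})\bigr)=0$.

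In the case $x_{i_1}=x_{i_3}$, procedure {\sc shift} sets $a_{i_1}-z_{i_1j}=-\delta$, $a_{i_3}-z_{i_3j}=0$, and $a_{i_2}-z_{i_2j}=\delta$, so $\Delta = \delta(x_{i_2}-x_{i_1})$. Here I would invoke the fact that $X$ is sorted in nonincreasing order and $i_1<i_2<i_3$: since $x_{i_1}=x_{i_3}$ forces all $x$-values with indices between $i_1$ and $i_3$ to be equal, we get $x_{i_2}=x_{i_1}$ and hence $\Delta=0$.

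This establishes the claim for every $\delta\ge 0$ (in fact for every real $\delta$). There is no genuine obstacle: the argument is pure bookkeeping, and the only point requiring care is that the second case silently relies on the monotonicity of the $x_i$'s to guarantee $x_{i_2}=x_{i_1}$.
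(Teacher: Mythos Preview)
Your proof is correct and follows essentially the same approach as the paper: reduce to the three modified coordinates, handle the case $x_{i_1}=x_{i_3}$ via the monotonicity of the $x_i$'s (forcing $x_{i_1}=x_{i_2}=x_{i_3}$), and in the other case verify by direct algebra that the weighted change is zero. Your introduction of $\lambda$ to express $x_{i_2}$ as an affine combination of $x_{i_1}$ and $x_{i_3}$ is just a clean repackaging of the same computation the paper does explicitly.
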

\begin{proof}
Due to~$a_{i} = z_{ij}$ for all~$i\in\{1,\ldots,n\}\setminus\{i_1,i_2,i_3\}$, it suffices to prove that
\[
   a_{i_1}x_{i_1} + a_{i_2}x_{i_2} + a_{i_3}x_{i_3} = z_{i_1j}x_{i_1} + z_{i_2j}x_{i_2} + z_{i_3j}x_{i_3}.
\]
In the first case~$x_{i_1}=x_{i_3}$, this follows easily because in this case~$x_{i_1}=x_{i_2}=x_{i_3}$
(remember that~$i_1<i_2<i_3$, which implies~$x_{i_1}\ge x_{i_2} \ge x_{i_3}$).
In the second case~$x_{i_1}>x_{i_3}$, we have
\begin{align*}
   & z_{i_1j}x_{i_1} + z_{i_2j}x_{i_2} + z_{i_3j}x_{i_3} - (a_{i_1}x_{i_1} + a_{i_2}x_{i_2} + a_{i_3}x_{i_3})\\
   & = \delta\cdot\frac{x_{i_2}-x_{i_3}}{x_{i_1}-x_{i_3}}\cdot x_{i_1} - \delta\cdot x_{i_2}
       +\delta\cdot\frac{x_{i_1}-x_{i_2}}{x_{i_1}-x_{i_3}}\cdot x_{i_3}\\
   & = \frac{\delta}{x_{i_1}-x_{i_3}} \cdot\Big( (x_{i_2}-x_{i_3})x_{i_1} - (x_{i_1}-x_{i_3})x_{i_2} +(x_{i_1}-x_{i_2})x_{i_3}\Big) = 0.\qedhere
\end{align*}
\end{proof}

Let~$Z'$ denote the matrix that we obtain from~$Z$ if we replace the $j^{\text{th}}$ column by the vector~$a$ returned by the procedure {\sc shift}. The previous lemma shows that~$Z'$ satisfies~\eqref{small-objective} and~\eqref{small-objective2} for the same~$\beta$ and~$\alpha$ as~$Z$ because the value~$z_j$ is not changed by the procedure. However, the matrix~$Z'$ is not doubly stochastic because the rows~$i_1$, $i_2$, and~$i_3$ do not add up to 1 anymore. In order to repair this, we have to apply the {\sc shift} operation again to another column with~$-\delta$. Formally, let us redefine the matrix~$Z'= \{z_{ij}'\}$ as the outcome of the operation {\sc transform} shown as Algorithm~\ref{alg:Transform}.
\begin{algorithm}
  \caption{{\sc transform}($Z,j,i_1,i_2,i_3$)}\label{alg:Transform}
  \begin{algorithmic}[1]
    \State The $j^{\text{th}}$ column of~$Z'$ equals~{\sc shift}($Z,j,i_1,i_2,i_3,\delta$) for~$\delta>0$ to be chosen later.
  \State Let~$j'>j$ denote the smallest index larger than~$j$
  with~$z_{i_2j'}>0$. Such an index must exist because row~$i_2$ is
  not finished in~$Z$ at column~$j$. The $(j')^{\text{th}}$ column of~$Z'$ equals~{\sc shift}($Z,j',i_1,i_2,i_3,-\delta$).  
  \State All columns of~$Z$ and~$Z'$, except for columns~$j$ and~$j'$,
  remain unchanged.
  \State The value~$\delta$ is chosen as the largest value for which all entries of~$Z'$ are in~$[0,1]$. This value must be strictly larger than~$0$ due to our choice of~$j$, $j'$, $i_1$, $i_2$, and~$i_3$.
    \State\Return $Z'$
  \end{algorithmic}
\end{algorithm}

Observe that~$Z'$ is a doubly stochastic matrix because the rows~$i_1$, $i_2$, and~$i_3$ sum up to 1 and all entries are from~$[0,1]$. Applying Lemma~\ref{lemma:TransformDoesNotChangeValue} twice implies that~$(Z',\beta,\alpha)$ is a feasible solution to the linear program if~$(Z,\beta,\alpha)$ is one. 

We will transform~$Z$ by a finite number of applications of the operation {\sc transform}. As long as the current matrix~$T$ (which is initially chosen as~$Z$) does not have the consecutiveness property, let~$j$ be the smallest index for which there exist indices~$i_1$, $i_3$, and~$i_2\in\{i_1+1,\ldots,i_3-1\}$ such that~$t_{i_1j}>0$, $t_{i_3j}>0$, and row~$i_2$ is not finished in~$T$ at column~$j$. Furthermore, let~$i_1$ and~$i_3$ be the smallest and largest index with~$t_{i_1j}>0$ and~$t_{i_3j}>0$, respectively, and let~$i_2$ be the smallest index from $\{i_1+1,\ldots,i_3-1\}$ for which row~$i_2$ is not finished at column~$j$. We apply the operation~$\text{\sc transform}(T,j,i_1,i_2,i_3)$ to obtain a new matrix~$T$. 
\begin{lemma}\label{lemma:PolynomialNumberTransform}
After at most a polynomial number of {\sc transform} operations, no further such operation can be applied. Then~$T$ is a doubly stochastic matrix with the consecutiveness property.
\end{lemma}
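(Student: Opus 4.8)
The plan is to split the argument into two parts. That each {\sc transform} keeps the current matrix doubly stochastic and LP-feasible has already been checked above, so what remains is to bound the number of {\sc transform} operations by a polynomial. To this end I would introduce a ``frontier'' column, show that it never moves backwards, and then bound the work done while the frontier sits at a fixed column by a polynomially bounded integer potential.

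First I would let $j^*$ denote the smallest index of a column of the current matrix $T$ that violates the consecutiveness property (with $j^*=n+1$ when none does). The operation $\text{\sc transform}(T,j,i_1,i_2,i_3)$ invoked by the procedure uses $j=j^*$ and modifies only columns $j$ and $j'>j$; the columns $1,\dots,j-1$, which satisfy the consecutiveness property by minimality of $j^*$, are untouched and therefore keep satisfying it. Hence $j^*$ is non-decreasing throughout the execution, and once it exceeds $n$ the procedure stops with a doubly stochastic $T$ all of whose columns are consecutive, i.e., with the consecutiveness property. So it suffices to bound the number of operations performed while $j^*$ equals a fixed value $j$, and then sum over the at most $n$ possible values of $j$.

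Fix such a ``round'' with $j^*=j$, and set $\ell=i_{\min}(j)$ and $r=i_{\max}(j)$, the smallest and largest rows with a positive entry in column $j$ (these are the indices $i_1,i_3$ the procedure picks). An operation at column $j$ creates at most one new positive entry there, namely $t_{i_2 j}$ with $\ell<i_2<r$, and can only decrease $t_{\ell j}$ and $t_{r j}$; hence $\ell$ is non-decreasing, $r$ is non-increasing, and the window $[\ell,r]$ changes fewer than $n$ times during the round. On a ``phase'' on which $[\ell,r]$ is constant, $i_1=\ell$ and $i_3=r$ are fixed, and the key claim is that the maximal step $\delta$ in {\sc transform} is then limited by one of the two constraints on row $i_2$, namely $t_{i_2 j}+\delta\le 1$ or $t_{i_2 j'}-\delta\ge 0$; saturating any of the remaining constraints, all of which involve rows $\ell$ or $r$, would simultaneously force the column-$j$ entry of that row down to $0$ and hence change the window, ending the phase. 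Granting this, every interior operation of a phase either empties row $i_2$'s first positive column after $j$ or makes $t_{i_2 j}=1$ (which empties all later columns of row $i_2$), while no operation of the round adds mass, in a column beyond $j$, to a row lying strictly between the current window endpoints (mass beyond column $j$ is added only to $\ell$ and $r$, and the endpoints only move inward). Consequently the nonnegative integer potential
\[
  \Phi \;=\; (r-\ell)\cdot n \;+\; \sum_{i=\ell+1}^{r-1}\bigl|\{\,k>j \;:\; t_{ik}>0\,\}\bigr| \;<\; 2n^2
\]
strictly decreases at every operation with $j^*=j$: a window change lowers the first term by at least $n$ and cannot raise the sum, whereas an interior operation leaves the first term untouched and lowers the sum by at least one. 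This caps the round at fewer than $2n^2$ operations, giving an $O(n^3)$ bound overall, after which the doubly stochastic $T$ has the consecutiveness property.

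The step I expect to be the main obstacle is the claim that, as long as the window $[\ell,r]$ is unchanged, $\delta$ is limited by a constraint on row $i_2$. Establishing it requires the sign analysis of the coefficients $\frac{x_{i_2}-x_{i_3}}{x_{i_1}-x_{i_3}}$ and $\frac{x_{i_1}-x_{i_2}}{x_{i_1}-x_{i_3}}$ in {\sc shift} (using $x_{i_1}\ge x_{i_2}\ge x_{i_3}$ so that they lie in $[0,1]$), a separate treatment of the degenerate case $x_{i_1}=x_{i_3}$, where row $i_3$ is not touched at all, and the observation that the entry of row $\ell$ (or row $r$) in column $j'$ can reach $1$ only if that row already carries all of its mass in columns $j$ and $j'$, which then forces its column-$j$ entry to $0$. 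The remaining ingredients — monotonicity of $j^*$, the shrinking of the window, and the accounting with $\Phi$ — are routine.
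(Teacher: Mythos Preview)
Your proposal is correct and follows the same skeleton as the paper: the frontier column~$j^*$ is non-decreasing, the window~$[i_1,i_3]$ at column~$j^*$ can only shrink, and then the work done at a fixed window is polynomially bounded. The paper closes the last step more directly by observing that, for fixed~$j,i_1,i_3$, the index~$i_2$ is non-decreasing and, for fixed~$j,i_1,i_2,i_3$, the index~$j'$ strictly increases after each {\sc transform}; this implicitly relies on precisely your ``key claim'' (if the window is unchanged then the binding constraint on~$\delta$ lies in row~$i_2$), so your potential~$\Phi$ is essentially a repackaging of the paper's $(i_2,j')$ lexicographic progress into a single integer, yielding the same argument with a slightly sharper explicit bound.
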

\begin{proof}
If the {\sc transform} operation is not applicable anymore, then by definition the current matrix~$T$ must satisfy the consecutiveness property. Hence, we only need to show that this is the case after at most a polynomial number of {\sc transform} operations.

First of all observe that the smallest index~$j$ for which column~$j$ does not satisfy the consecutiveness property cannot decrease because {\sc transform} does not change the columns~$1,\ldots,j-1$. Hence, we only need to argue that~$j$ increases after a polynomial number of {\sc transform} operations. For this, observe that the smallest index~$i_1$ with~$t_{i_1j}>0$ cannot decrease and that the largest index~$i_3$ with~$t_{i_3j}>0$ cannot increase because the {\sc transform} operation only increases~$t_{i_2j}$ for some~$i_2$ with~$i_1<i_2<i_3$. Hence, again it is sufficient to prove that either~$i_1$ increases or~$i_3$ decreases after a polynomial number of steps. This follows from the fact that as long~$j$, $i_1$, and~$i_3$ do not change,~$i_2$ cannot decrease. Furthermore, as long as~$j$, $i_1$, $i_2$, and $i_3$ do not change, the index~$j'$ increases with every {\sc transform} operation. Hence, after at most~$n$ steps~$i_2$ has to increase, which implies that after at most~$n^2$ steps~$i_1$ has to increase or~$i_3$ has to decrease. 
\end{proof}

In the remainder, we will not need the matrix~$Z$ anymore but only matrix~$T$. For convenience, we will use the notation
$t_j = \sum_{i=1}^n t_{ij}\cdot x_i$ instead of~$z_j$ even though the
transformation ensures that~$t_j$ and~$z_j$ are equal.

We now define a graph whose connected components or {\em blocks}
will correspond to the row indices from columns that overlap.  More
formally, let $V=\{1,\ldots,n\}$ denote a set of vertices and
let $G_0$ be the empty graph on~$V$. Each column~$j$ of~$T$ defines a set~$E_j$ of edges as follows:
the set~$E_j$ is a clique on the vertices~$i\in V$ with~$t_{ij}>0$, i.e. $E_j$ contains an edge between two vertices~$i$ and~$i'$ if and only if~$t_{ij}>0$ and~$t_{i'j}>0$. We denote by~$G_j$ the graph on~$V$ with edge set~$E_1\cup\ldots\cup E_j$. 

\begin{definition}
A {\em block} in $G_j$ is a set of indices in
$[1,n]$ that forms a connected component in $G_j$.
A block in $G_j$ is called finished if all rows in $T$ corresponding to the
indices it contains are finished at column $j$.  Similarly, if a block
in $G_j$ contains at least one unfinished row at column $j$, it is
called an {\em unfinished block}.
\end{definition}

If~$B\subseteq\{1,\ldots,n\}$ is a block in~$G_j$ with~$i\in B$ then we will say that \emph{block~$B$ contains row~$i$}. For the following lemma, it is convenient to define a matrix~$C=\{c_{ij}\}$, which is the cumulative version of~$T$. To be more precise, the $j^{\text{th}}$ column of~$C$ equals the sum of the first~$j$ columns of~$T$.

\begin{lemma}\label{lem:BlockStructure}
The following three properties are satisfied for every~$j$.
\begin{enumerate}
\item Let $B$ be a block in $G_j$ and let~$k=\sum_{i\in B}c_{ij}$ the denote the value of block~$B$ at column~$j$. The number of rows in~$B$ is~$k$ if~$B$ is finished and it is~$k+1$ if~$B$ is an unfinished block. 
\item The set of blocks in~$G_j$ emerges from the set of blocks in~$G_{j-1}$ by either merging exactly two unfinished blocks or by making one unfinished block finished.
\item Let~$B_1,\ldots,B_{\ell}$ denote the unfinished blocks in~$G_j$. Then there exist nonoverlapping intervals~$I_1,\ldots,I_{\ell}\subseteq[1,n]$ with~$B_i\subseteq I_i$ for every~$i$.
\end{enumerate}
\end{lemma}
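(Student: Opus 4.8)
The plan is to prove all three properties simultaneously by induction on~$j$, since they reinforce each other: property~2 describes how blocks evolve, which is needed to maintain property~1 (the vertex-count formula) and property~3 (the interval separation). For the base case, take~$j=0$ (or~$j=1$): in~$G_0$ every block is a singleton~$\{i\}$ with~$c_{i0}=0$, so the formula ``$k+1$ vertices'' holds for unfinished singletons and ``$k$ vertices'' for the (nonexistent at~$j=0$) finished ones; the intervals are just the singletons themselves. When~$j=1$, $E_1$ is a clique on the support of column~$1$, and since each column of a doubly stochastic matrix sums to~$1$, the value of that block is~$1$ and it is unfinished iff it has more than one vertex, so again~$k+1 = (\text{number of rows})$ with~$k=1$, matching the formula; singleton blocks with~$t_{i1}=1$ are finished with~$k=1=1$ vertex.

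For the inductive step, assume the three properties hold for~$G_{j-1}$ and consider adding the clique~$E_j$ on~$S_j = \{i : t_{ij}>0\}$. The first key observation, which I would establish using the \emph{consecutiveness property} of~$T$ (Lemma~\ref{lemma:PolynomialNumberTransform}), is that~$S_j$ has a very restricted shape: writing~$i_1 = \min S_j$ and~$i_3 = \max S_j$, every row strictly between them is finished at column~$j$. Combined with the fact that a row is finished at column~$j$ iff its cumulative entry reached~$1$ at some column~$\le j$, this means~$S_j$ intersects at most one \emph{unfinished} block of~$G_{j-1}$ in an essential way — more precisely, I expect~$S_j$ to be contained in the union of at most two unfinished blocks of~$G_{j-1}$ together with some already-finished rows. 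Here is where property~3 for~$G_{j-1}$ does real work: the unfinished blocks live in disjoint intervals~$I_1,\dots,I_\ell$, and~$S_j$ being an ``interval-like'' set that can only straddle finished rows in between means it meets at most two of these intervals, namely two \emph{consecutive} ones. So~$E_j$ either (a) lies inside a single block of~$G_{j-1}$, in which case the block structure is unchanged as a partition, or (b) connects exactly two (consecutive, unfinished) blocks into one.

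Given this, property~2 is almost immediate: in case~(a) the only thing that can change is whether the block becomes finished (this happens precisely when column~$j$ completes the last unfinished row in that block); in case~(b) two unfinished blocks merge — and I would check that the merged block must itself be unfinished, because its total value is strictly less than its vertex count minus one would be violated otherwise; concretely, if~$B = B'\cup B''$ with~$B',B''$ unfinished having~$k'+1$ and~$k''+1$ vertices, then after processing column~$j$ the value is~$k'+k''+(\text{mass added in column }j\text{ to rows of }B)$, and since column~$j$'s total mass is~$1$ and it is spread over~$S_j\subseteq B$, the new value is at most~$k'+k''+1$, and it equals~$k'+k''+1$ exactly when column~$j$ finishes the block — matching the ``$k$ vertices if finished, $k+1$ if unfinished'' formula with~$k = k'+k''+1$. (The arithmetic here is the routine part; I would be careful about whether newly-finished singleton rows that column~$j$ touches get absorbed, but the consecutiveness property rules out the awkward configurations.) Property~1 then follows from this bookkeeping, and property~3 follows because a merge of two blocks occupying consecutive disjoint intervals~$I_a, I_b$ (with only finished rows, hence rows not in any unfinished block, in between) can be assigned the interval~$I_a \cup (\text{gap}) \cup I_b$, which is still disjoint from the other~$I_i$'s; when a block merely becomes finished it drops out of the list and the remaining intervals still work.

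The main obstacle I anticipate is case analysis at the boundary between blocks: ruling out that~$E_j$ connects three or more unfinished blocks, or connects two \emph{non-consecutive} ones. This is exactly what the consecutiveness property is for, but making the argument airtight requires relating ``row~$i_2$ finished at column~$j$'' (a statement about~$T$ and the matrix~$C$) to ``$i_2$ lies in a finished block of~$G_{j-1}$'' or in no unfinished block — i.e.\ one must show that a row finished at column~$j$ cannot sit inside an unfinished block of~$G_{j-1}$ strictly between two vertices of~$S_j$ belonging to different unfinished blocks. I would handle this by induction together with property~3: the intervals~$I_1,\dots,I_\ell$ from~$G_{j-1}$ partition the ``unfinished mass'' into separated chunks, every vertex of~$S_j$ that is unfinished at column~$j-1$ lies in one of them, and the consecutiveness property forbids a vertex of~$I_a$ and a vertex of~$I_b$ ($a\neq b$) to both be in~$S_j$ unless everything between them — in particular all of~$I_c$ for~$a<c<b$, which contains an unfinished row — is finished at column~$j$, a contradiction unless~$b = a+1$ and the stuff in between is genuinely finished-by-column-$j$ filler. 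Pinning down this last step cleanly is the crux; the rest is bookkeeping with the value formula.
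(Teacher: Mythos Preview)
Your overall strategy matches the paper's: induction on~$j$, using the consecutiveness property of~$T$ to control how the clique~$E_j$ interacts with the existing blocks. The bookkeeping for properties~1 and~3 is essentially right in spirit. However, there is a real gap at exactly the point you flag as ``the crux'': ruling out that~$S_j$ meets three or more unfinished blocks of~$G_{j-1}$.

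Your sketched contradiction does not work as stated. You argue that if~$S_j$ meets~$I_a$ and~$I_b$ with~$a<c<b$, then consecutiveness forces every row of~$I_c$ to be finished at column~$j$, and since~$I_c$ contains a row unfinished at column~$j-1$, this is ``a contradiction''. But it isn't: a row unfinished at column~$j-1$ can perfectly well become finished at column~$j$ by receiving mass there. Consecutiveness only says rows strictly between~$\min S_j$ and~$\max S_j$ are finished \emph{at column~$j$}, not at column~$j-1$.

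The missing idea, which the paper supplies, is a mass-counting argument using property~1 from the inductive hypothesis. If~$B_c$ is an unfinished block of~$G_{j-1}$ lying entirely between the extreme elements of~$S_j$, then by property~1 its value at column~$j-1$ is~$|B_c|-1$, so the total deficit~$\sum_{i\in B_c}(1-c_{i,j-1})$ equals~$1$. Finishing every row of~$B_c$ at column~$j$ (which consecutiveness demands) therefore consumes the \emph{entire} mass of column~$j$, leaving nothing for rows in~$B_a$ or~$B_b$ --- contradicting that~$S_j$ meets those blocks. This is the step you need.

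Two smaller slips. In your case~(b), the merged block~$B=B'\cup B''$ receives exactly mass~$1$ (not ``at most''), so its new value is exactly~$k'+k''+1=|B|-1$ and the merged block is \emph{always} unfinished, never finished. Dually, in your case~(a), when~$S_j$ lies inside a single unfinished block~$B$, that block \emph{always} becomes finished (its value goes from~$|B|-1$ to~$|B|$), not merely ``possibly''. These corrections are what make the dichotomy in property~2 clean. Finally, $S_j$ never contains already-finished rows (a row finished at column~$j-1$ has~$t_{ij}=0$), so your worry about ``newly-finished singleton rows that column~$j$ touches'' is moot.
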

\begin{proof}
We prove the lemma by induction on~$j$. Let us first consider the base case~$j=1$. The consecutiveness property of~$T$ guarantees that the first column of~$C$ (which equals the first column of~$T$) contains at most two strictly positive entries. Let~$B$ denote the block that corresponds to these entries. The value of this block is one because the sum of all entries of the first column equals 1. If~$|B|=1$ then~$B$ is finished because if~$T$ contains only one positive entry in the first column, then this entry must be 1. If~$|B|=2$ then~$B$ is unfinished because neither of its rows is finished. In both cases the first statement of the lemma is true for block~$B$. All rows that have a zero in the first column form an unfinished block of their own with value zero. Also for these blocks the first statement is correct. The second statement is also correct because if~$|B|=1$ then the only difference between the blocks of~$G_0$ and~$G_1$ is that block~$B$ becomes finished, and if~$|B|=2$ then two unfinished blocks of~$G_0$ are merged. The correctness of the third statement follows from the fact that in the case~$|B|=2$, the two entries of~$B$ are consecutive due to the consecutiveness property of~$T$.

Now we come to the inductive step and assume that the statement is correct for the blocks of~$G_{j-1}$. Let~$I\subseteq[1,n]$ denote the set of indices~$i$ for which~$t_{ij}>0$. Observe that~$I$ can only be nondisjoint from unfinished blocks of~$G_{j-1}$. Due to the definition of~$G_j$ only blocks that are nondisjoint from~$I$ change from~$G_{j-1}$ to~$G_j$. Hence, the correctness of the first statement for all blocks of~$G_j$ that are disjoint from~$I$ follows from the induction hypothesis. If~$I$ is nondisjoint only from a single block~$B$ of~$G_{j-1}$ then this block will become finished. This follows from the fact that~$B$ has value~$|B|-1$ in~$G_{j-1}$ and that a total value of one is added to~$B$ because~$T$ is a doubly stochastic matrix. Hence, in this case~$G_{j-1}$ and~$G_j$ define the same set of blocks and the only difference is that~$B$ is unfinished in~$G_{j-1}$ and finished in~$G_j$. Then the correctness of all three statements follows from the induction hypothesis.

It remains to consider the case that~$I$ is nondisjoint from at least two blocks of~$G_{j-1}$. First we observe that~$I$ can be nondisjoint from at most two blocks of~$G_{j-1}$. Assume for contradiction that~$I$ is nondisjoint from three different blocks~$B_1$, $B_2$, and~$B_3$. Due to the third property, the induction hypothesis implies that one of these blocks must be entirely between the two others. Let~$B_2$ be this block. Since~$I$  is nondisjoint from~$B_1$ and~$B_3$, there are two indices~$i_1$ and~$i_3$ with~$t_{i_1j}>0$ and~$t_{i_3j}>0$ and~$i_1<i_2<i_3$ for all~$i_2\in B_2$. Due to the consecutiveness property of~$T$, this is only possible if all rows that belong to~$B_2$ are finished at column~$j$. Due to the induction hypothesis, the value of~$B_2$ at column~$j-1$ is~$|B_2|-1$. Hence, in order to finish all rows that belong to~$B_2$ one has to add a value of exactly 1 to~$B_2$ in column~$j$. Since column~$j$ of~$T$ sums to 1, this implies that there cannot be an index~$i\notin B_2$ with~$t_{ij}>0$, contradicting the choice of~$i_1$ and~$i_3$. This implies the correctness of the second property.

Hence, we only need to consider the case that~$I$ is nondisjoint from
exactly two blocks~$B_1$ and~$B_2$ of~$G_{j-1}$. Due to the induction
hypothesis the values of these blocks at column~$j-1$ are~$|B_1|-1$
and~$|B_2|-1$, respectively. Since column~$j$ of~$T$ has a sum of 1, the value of the block~$B$ in~$G_j$ that emerges from merging~$B_1$ and~$B_2$ has a value of~$(|B_1|-1)+(|B_2|-1)+1=|B_1|+|B_2|-1=|B|-1$. This proves the first property. To prove the third property, we use the fact that the consecutiveness property of~$T$ guarantees that there cannot be an unfinished block between~$B_1$ and~$B_2$ in~$G_{j-1}$. Hence, we can associate with~$B$ the smallest interval that contains the intervals~$I_1$ and~$I_2$ that were associated with~$B_1$ and~$B_2$ in~$G_{j-1}$. This also proves the third property.
\end{proof}

One might ask if the consecutiveness property is satisfied by every
optimal extreme point of the linear program. Let us mention that this
is not the case. A simple counterexample is provided by the
instance~$X=\{9,6,4,1\}$ and~$Y=\{5,5,5,5\}$. In this instance, an
optimal extreme point would be, for example, to take one half of each
of the items~$x_1$ and~$x_4$ in steps one and three and to take one
half of each of the items~$x_2$ and~$x_3$ in steps two and four. This
extreme point does not, however, satisfy the consecutiveness
property. Hence, the transformation described in this section is
necessary.

\subsection{Rounding}

In this section, we use the transformed matrix $T$ to create the
solution matrix~$R$, which is a doubly stochastic $0/1$ matrix,
i.e., a permutation matrix. We apply the following rounding method.

  \begin{algorithmic}[1]
  	 \For{$j=1$ to $n$}
  	   \State Let $B$ denote the \emph{active} block in $G_j$, i.e., the block that contains the rows~$i$ with~$t_{ij}>0$.
      \State Let~$p$ denote the smallest index in $B$ such that $r_{pi} = 0$ for all
  $i<j$. \label{line:Rounding}
       \State Set $r_{pj} = 1$ and~$r_{qj}=0$ for all~$q\neq p$.
  	 \EndFor
  \end{algorithmic}

Observe that the first step is well-defined because all nonzero
entries in column~$j$ belong by definition to the same block of~$G_j$.
The resulting matrix $R$ will be doubly stochastic, since each column
contains a single one, as does each row. We just need to prove that in
Line~\ref{line:Rounding} there always exists a row~$p\in B$ that is
unfinished in~$R$ at column~$j-1$.  This follows from the first part
of the next lemma because, due to Lemma~\ref{lem:BlockStructure}, the
active block~$B$ in~$G_j$ emerges from one or two unfinished blocks
in~$G_{j-1}$ and these blocks each contain a row that is unfinished
in~$R$ at column~$j-1$.

\begin{lemma}\label{lem:round}
Let $B$ be a block in $G_j$ for some~$j\in\{1,\ldots,n\}$.
\begin{enumerate}
\item If $B$ is an unfinished block in $G_j$ and $p$ is the largest index in $B$, then
$r_{pi} = 0$ for all $i\leq j$ and all rows corresponding to
$B\setminus\{p\}$ are finished in $R$ at column $j$.
\item If $B$ is a finished block in $G_j$, then for all $q \in B$, row
  $q$ is finished in $R$ at column $j$.
\end{enumerate}
\end{lemma}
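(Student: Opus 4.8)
The plan is to prove the two statements together by induction on $j$, using the structural facts of Lemma~\ref{lem:BlockStructure}, and in particular the dichotomy in its second part: passing from $G_{j-1}$ to $G_j$ either makes a single unfinished block finished or merges two unfinished blocks into one. Informally, the invariant I want to maintain is that the rounding fills up the rows of each unfinished block from the smallest index upward, so that at every stage an unfinished block has exactly one row that is still unfinished in $R$, namely its largest index, while a finished block has all its rows already assigned in $R$.

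For the base case $j=1$ I would recall, as in the proof of Lemma~\ref{lem:BlockStructure}, that the consecutiveness property forces column $1$ of $T$ to have at most two positive entries. If it has exactly one, that entry is $1$, its singleton block is finished, the rounding places the single $1$ of column $1$ in that row, so statement~2 holds; if it has two (necessarily consecutive) positive entries, they form an unfinished block, the rounding assigns the $1$ of column $1$ to the smaller of the two, and statement~1 holds with the larger index playing the role of $p$. Every other row is an isolated, unfinished singleton block, for which statement~1 is vacuous. For the inductive step I would assume the lemma holds for $G_{j-1}$; this is exactly what guarantees (via statement~1 applied to the one or two unfinished blocks of $G_{j-1}$ out of which the active block of $G_j$ is formed) that Line~\ref{line:Rounding} of the rounding finds a valid row $p$. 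Writing $I=\{i:t_{ij}>0\}$, Lemma~\ref{lem:BlockStructure} tells us $I$ lies inside the union of those one or two blocks, so every block of $G_j$ other than the active block $B$ coincides with a block of $G_{j-1}$ and receives no new $1$ in column $j$ of $R$; hence both statements for those blocks are inherited verbatim from the inductive hypothesis. For the active block $B$ there are two cases. If $B$ is a single unfinished block of $G_{j-1}$ that becomes finished, then by statement~1 of the hypothesis its unique $R$-unfinished row at column $j-1$ is its largest index, the rounding finishes precisely that row, and statement~2 follows. If $B=B_1\cup B_2$ is a merge of two unfinished blocks, then the third part of Lemma~\ref{lem:BlockStructure} puts $B_1$ and $B_2$ in disjoint intervals, so after relabelling every index of $B_1$ precedes every index of $B_2$; by statement~1 of the hypothesis the $R$-unfinished rows of $B$ at column $j-1$ are $\max B_1$ and $\max B_2$, the rounding picks the smaller one $\max B_1$, and afterwards the only $R$-unfinished row of $B$ is $\max B_2=\max B$, which gives statement~1 for $B$.

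The point that will need the most care is keeping the notion of a row being \emph{finished in $R$} strictly separate from the notion of a block being \emph{finished in $T$}: the inductive hypothesis — not any counting of how many rows of a block are unfinished in $T$ — is what pins down that each unfinished block contributes exactly one $R$-unfinished row and that it is the largest index of the block. Once this is kept straight, the merge case, and in particular the claim that the new $1$ is always drawn from the left of the two merging blocks, drops out of the interval-separation guaranteed by Lemma~\ref{lem:BlockStructure}; the remaining verifications (that $r_{p,i}=0$ persists for the surviving largest index, and that the blocks other than the active one are genuinely unchanged) are routine.
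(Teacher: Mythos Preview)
Your proposal is correct and follows essentially the same inductive argument as the paper: same base case via the consecutiveness property, same dichotomy (merge vs.\ finish) from Lemma~\ref{lem:BlockStructure} in the inductive step, and the same identification of the row the rounding selects. The only cosmetic difference is that you invoke part~3 of Lemma~\ref{lem:BlockStructure} to place $B_1$ entirely to the left of $B_2$, whereas the paper simply takes $\ell_1<\ell_2$ for the two largest indices; either way the algorithm selects the smaller of the two maxima and leaves $\max B$ unfinished, so nothing substantive changes.
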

\begin{proof}
We will prove the lemma by induction on~$j$.
Let us first consider the base case~$j=1$. The consecutiveness property of~$T$ guarantees that the first column of~$T$ contains at most two strictly positive entries. Let~$B$ denote the block that corresponds to these entries. If~$|B|=1$ then~$B=\{p\}$ is finished in~$T$ at column~$1$ and the rounding will set~$r_{p1}=1$.
If~$|B|=2$ then~$B=\{p,q\}$ is unfinished and the rounding will set~$r_{p1}=1$ if~$p<q$. In both cases the statement of the lemma is correct for~$B$. All other blocks in~$G_1$ are unfinished singleton blocks, for which the lemma is also true.

Now let us assume that the lemma is true for $j-1$ and prove it for $j$.
By property 2 of Lemma~\ref{lem:BlockStructure}, the
blocks in $G_j$ emerge from the blocks in $G_{j-1}$ either by
merging exactly two unfinished blocks or making one unfinished block
finished.  In the former case, suppose we merge two blocks $B_1$ and~$B_{2}$.  Let $\ell_1$ and $\ell_2$ denote the largest indices in
$B_1$ and $B_{2}$, respectively, and assume that~$\ell_1<\ell_2$.  By
assumption, we have that
$r_{\ell_1 i}=r_{\ell_2 i}=0$ for all
$i \le j-1$.  Thus, we can set $r_{\ell_1 j} = 1$, and the first statement
will still hold for the new unfinished block in $G_j$.
In the latter case, suppose that~$B$ is an unfinished block in
$G_{j-1}$ that becomes finished in $G_j$ and that $\ell$ is the
largest label in~$B$.  Then by assumption, $r_{\ell i} = 0$ for all
$i \leq j-1$, so we can set $r_{\ell j} = 1$ and statement (ii) holds.
\end{proof}

We define the \emph{value} of a permutation matrix~$M$ to be the smallest~$\gamma$ for which there exist~$\alpha'$ and~$\beta'$ with~$\gamma=\beta'-\alpha'$ such that~$(M,\alpha',\beta')$ is a feasible solution to the linear program.
\begin{theorem}\label{thm:LPRounding}
Let~$(T,\alpha,\beta)$ be an optimal solution to the linear program.
Then $(R,\alpha,\beta+\mu_x)$ is a feasible solution to the linear program.
Hence, the value of the matrix~$R$ is at most $(\beta-\alpha) + \mu_{x} \leq 2\cdot \text{OPT}$,
where~$\text{OPT}$ denotes the value of the optimal permutation matrix.
\end{theorem}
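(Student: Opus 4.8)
The plan is to control the prefix sums of the permutation matrix~$R$ by those of the transformed matrix~$T$. I would write $r_j = \sum_{i=1}^n r_{ij} x_i$ for the single $x$-value that $R$ places in position~$j$ and set $D_k = \sum_{j=1}^{k} r_j - \sum_{j=1}^{k} t_j$ for $k \in \{1,\ldots,n\}$. The heart of the argument is the two-sided estimate $0 \le D_k \le \mu_x$ for every~$k$. Once this is established, feasibility of $(R,\alpha,\beta+\mu_x)$ follows immediately: applying \eqref{small-objective} to $(T,\alpha,\beta)$ gives $\sum_{j=1}^{k} r_j - \sum_{j=1}^{k-1} y_j = \bigl(\sum_{j=1}^{k} t_j - \sum_{j=1}^{k-1} y_j\bigr) + D_k \le \beta + \mu_x$, and applying \eqref{small-objective2} to $(T,\alpha,\beta)$ gives $\sum_{j=1}^{k} r_j - \sum_{j=1}^{k} y_j = \bigl(\sum_{j=1}^{k} t_j - \sum_{j=1}^{k} y_j\bigr) + D_k \ge \alpha$. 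Hence the value of~$R$ is at most $(\beta+\mu_x) - \alpha = (\beta-\alpha) + \mu_x$. Since $T$ is LP-optimal and every permutation matrix is doubly stochastic (hence LP-feasible), $\beta-\alpha$ is at most the value of the optimal permutation matrix, i.e.\ $\beta-\alpha \le \text{OPT}$; and $\mu_x = x_1$ is a lower bound on $\text{OPT}$ as observed earlier; so the value of~$R$ is at most $2\,\text{OPT}$.

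To prove $0 \le D_k \le \mu_x$, I would expand $D_k$ as a sum of contributions, one per block of~$G_k$, which is legitimate because the blocks partition the rows. A finished block~$B$ of~$G_k$ contributes~$0$: each $i \in B$ has $c_{ik} = 1$ since $B$ is finished, and each row of~$B$ is finished in~$R$ at column~$k$ by the second statement of Lemma~\ref{lem:round}, so $B$ adds $\sum_{i \in B} x_i$ to both $\sum_{j=1}^{k} r_j$ and $\sum_{j=1}^{k} t_j$. For an unfinished block~$B$ with largest index~$p_B$, the first statement of Lemma~\ref{lem:round} says the rows of~$B$ finished in~$R$ at column~$k$ are exactly $B \setminus \{p_B\}$, and the first property of Lemma~\ref{lem:BlockStructure} gives $\sum_{i \in B} c_{ik} = |B| - 1$, hence $\sum_{i \in B}(1 - c_{ik}) = 1$ and $\sum_{i \in B \setminus \{p_B\}}(1 - c_{ik}) = c_{p_B k}$. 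Therefore the contribution of~$B$ to~$D_k$ is $D_k^{(B)} = \sum_{i \in B \setminus \{p_B\}}(1 - c_{ik}) x_i - c_{p_B k} x_{p_B}$.

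For the lower bound I would use that $X$ is sorted non-increasingly and $p_B = \max B$, so $x_i \ge x_{p_B}$ for all $i \in B$; then $D_k^{(B)} \ge x_{p_B}\sum_{i \in B \setminus \{p_B\}}(1 - c_{ik}) - c_{p_B k} x_{p_B} = 0$, and summing over blocks yields $D_k \ge 0$ (a singleton unfinished block contributes~$0$ since then $c_{p_B k} = |B| - 1 = 0$). For the upper bound, write $q_B = \min B$, so that $x_i \le x_{q_B}$ for all $i \in B$; then $D_k^{(B)} \le x_{q_B} c_{p_B k} - c_{p_B k} x_{p_B} = c_{p_B k}(x_{q_B} - x_{p_B}) \le x_{q_B} - x_{p_B}$. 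Now I would invoke the third property of Lemma~\ref{lem:BlockStructure}: the unfinished blocks $B_1, \ldots, B_\ell$ of~$G_k$ lie inside pairwise disjoint intervals, which I order so that every index of~$B_m$ precedes every index of~$B_{m+1}$; in particular $p_{B_m} < q_{B_{m+1}}$ and hence $x_{p_{B_m}} \ge x_{q_{B_{m+1}}}$. Summing the per-block bounds and telescoping,
\[
   D_k \le \sum_{m=1}^{\ell}\bigl(x_{q_{B_m}} - x_{p_{B_m}}\bigr) = x_{q_{B_1}} - \sum_{m=1}^{\ell-1}\bigl(x_{p_{B_m}} - x_{q_{B_{m+1}}}\bigr) - x_{p_{B_\ell}} \le x_{q_{B_1}} \le x_1 = \mu_x .
\]
I expect this telescoping to be the main obstacle: the crude per-block bound $D_k^{(B)} \le \mu_x$ only yields $D_k \le \ell\,\mu_x$, so one really must combine the sortedness of~$X$ with the non-overlapping interval structure of the unfinished blocks to obtain the cancellation that keeps $D_k$ below~$\mu_x$.
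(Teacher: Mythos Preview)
Your proposal is correct and follows essentially the same approach as the paper: you reduce the theorem to the two-sided prefix bound $0\le D_k\le\mu_x$, decompose $D_k$ over the blocks of~$G_k$, use Lemmas~\ref{lem:round} and~\ref{lem:BlockStructure} exactly as the paper does (your identity $\sum_{i\in B\setminus\{p_B\}}(1-c_{ik})=c_{p_Bk}$ is the content of Lemma~\ref{lem:excess}), and then exploit the disjoint-interval structure of the unfinished blocks together with the sortedness of~$X$ to telescope the per-block bounds down to~$\mu_x$. The only cosmetic difference is that you spell out the telescoping explicitly, whereas the paper compresses it into the single inclusion $\sum_f(x_{a_f}-x_{b_f})\in[0,x_1-x_n]$.
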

For ease of notation, we define $r_j$ as follows:
$r_j  =  \sum_{i=1}^n r_{ij} \cdot x_i$.
Note that~$r_j$ corresponds to the value of the element from $X$ that
the algorithm places in position~$j$.
We will see later that Theorem~\ref{thm:LPRounding} follows easily from the next lemma.
\begin{lemma}\label{lemma:ErrorPrefix}
For each~$k\in\{1,\ldots,n\}$,
\vspace{-2mm}
\begin{eqnarray}
\sum_{j=1}^{k} (r_j-t_j) \in [0,\mu_{x}].
\end{eqnarray}
\end{lemma}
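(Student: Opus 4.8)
\emph{Proof plan.} The idea is to put the prefix error into a closed form in terms of the cumulative matrix~$C=\{c_{ij}\}$ and the partial permutation matrix~$R$, and then split it over the blocks of~$G_k$, using the structural \lemmaref{lem:BlockStructure} and \lemmaref{lem:round} already established. Since~$R$ is built up as a permutation matrix, for each row~$i$ the sum $\sum_{j=1}^{k} r_{ij}$ is~$1$ if row~$i$ has been assigned to one of the columns $1,\dots,k$ and~$0$ otherwise; write~$F_k$ for the set of those rows, so~$|F_k|=k$. Interchanging summations gives $\sum_{j=1}^{k} r_j=\sum_{i\in F_k} x_i$ and $\sum_{j=1}^{k} t_j=\sum_{i=1}^{n} c_{ik}x_i$. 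Defining $g_i=1-c_{ik}$ for $i\in F_k$ and $g_i=-c_{ik}$ for $i\notin F_k$, we obtain $\sum_{j=1}^{k}(r_j-t_j)=\sum_{i=1}^{n} g_i x_i$ with $\sum_{i=1}^{n} g_i=|F_k|-k=0$.

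Next I would decompose this sum over the blocks of~$G_k$. On a finished block every $c_{ik}=1$ and, by \lemmaref{lem:round}(2), every row lies in~$F_k$, so $g_i=0$ there. Let~$B$ be an unfinished block of~$G_k$ with largest index $p=\max B$. By \lemmaref{lem:round}(1) we have $p\notin F_k$ while $B\setminus\{p\}\subseteq F_k$, and by \lemmaref{lem:BlockStructure}(1) the value of~$B$ satisfies $\sum_{i\in B} c_{ik}=|B|-1$. Consequently $g_p=-c_{pk}\le 0$, $g_i=1-c_{ik}\ge 0$ for $i\in B\setminus\{p\}$, and $\sum_{i\in B} g_i=0$, i.e.\ $\sum_{i\in B\setminus\{p\}} g_i=c_{pk}$. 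Hence
\[
  \sum_{j=1}^{k}(r_j-t_j)=\sum_{B}\Big(\sum_{i\in B\setminus\{p_B\}} g_i x_i-c_{p_Bk}\,x_{p_B}\Big),
\]
the outer sum ranging over the unfinished blocks~$B$ of~$G_k$, with $p_B=\max B$.

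For the lower bound, each block contributes nonnegatively: every $i\in B\setminus\{p\}$ has $i<p$, so $x_i\ge x_p$ (the~$x_i$ are non-increasing), giving $\sum_{i\in B\setminus\{p\}} g_i x_i\ge x_p\sum_{i\in B\setminus\{p\}} g_i=c_{pk}x_p$, so the contribution of~$B$ is at least~$0$ and $\sum_{j=1}^{k}(r_j-t_j)\ge 0$. For the upper bound, $x_i\le x_{\min B}$ for $i\in B$ yields $\sum_{i\in B\setminus\{p\}} g_i x_i\le x_{\min B}\,c_{pk}$, so the contribution of~$B$ is at most $c_{pk}(x_{\min B}-x_p)\le x_{\min B}-x_p$, since $0\le c_{pk}\le 1$ (a row of the doubly stochastic matrix~$T$ sums to~$1$) and $x_{\min B}\ge x_p$. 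Finally, \lemmaref{lem:BlockStructure}(3) places the unfinished blocks in non-overlapping intervals; ordering them $B_1,\dots,B_\ell$ from left to right gives $\min B_t>\max B_{t-1}=p_{t-1}$, hence $x_{\min B_t}\le x_{p_{t-1}}$ for $t\ge 2$. Summing,
\[
  \sum_{j=1}^{k}(r_j-t_j)\ \le\ \sum_{t=1}^{\ell}(x_{\min B_t}-x_{p_t})\ \le\ (x_{\min B_1}-x_{p_1})+\sum_{t=2}^{\ell}(x_{p_{t-1}}-x_{p_t})\ =\ x_{\min B_1}-x_{p_\ell}\ \le\ x_{\min B_1}\ \le\ x_1=\mu_x.
\]

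The only genuinely nontrivial point is this upper bound: estimating each block separately gives only a bound proportional to the number of unfinished blocks, and it is precisely the disjoint-interval structure of \lemmaref{lem:BlockStructure}(3) that lets the per-block estimates telescope down to a single~$\mu_x$. Everything else is bookkeeping built directly on \lemmaref{lem:BlockStructure} and \lemmaref{lem:round}.
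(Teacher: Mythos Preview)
Your proof is correct and follows essentially the same approach as the paper's: both decompose the prefix error over the blocks of~$G_k$, observe that finished blocks contribute zero, bound each unfinished block's contribution by $[0,\,x_{\min B}-x_{\max B}]$ via the identity $c_{p_Bk}=\sum_{i\in B\setminus\{p_B\}}(1-c_{ik})$ (the paper records this separately as \lemmaref{lem:excess}), and then invoke the disjoint-interval structure from \lemmaref{lem:BlockStructure}(3) to telescope the sum down to at most $x_1-x_n\le\mu_x$. Your $g_i$ notation and the explicit telescoping are cosmetic variations on the paper's presentation.
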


We need the following lemma in the proof of Lemma~\ref{lemma:ErrorPrefix}.
\begin{lemma}\label{lem:excess}
Let~$b$ be the largest index in an unfinished block~$B$ in~$G_j$.
Then,
\begin{eqnarray*}
c_{bj} & = & \sum_{i \in B\setminus\{b\}} (1-c_{ij}).
\end{eqnarray*}
\end{lemma}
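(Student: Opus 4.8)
The plan is to prove this by induction on~$j$, in parallel with the structural facts already established in Lemma~\ref{lem:BlockStructure}. The key observation is that, by Lemma~\ref{lem:BlockStructure}(1), an unfinished block~$B$ in~$G_j$ with value $k = \sum_{i\in B} c_{ij}$ has exactly $k+1$ rows. Rewriting this, $k = |B| - 1 = \sum_{i\in B} 1 - 1$, and since $c_{bj} = k - \sum_{i\in B\setminus\{b\}} c_{ij}$, we get $c_{bj} = |B| - 1 - \sum_{i\in B\setminus\{b\}} c_{ij} = \sum_{i\in B\setminus\{b\}}(1 - c_{ij})$. So in fact the lemma is essentially an algebraic restatement of Lemma~\ref{lem:BlockStructure}(1), combined with the fact that each $c_{ij} \le 1$ (so that $1 - c_{ij} \ge 0$ and the right-hand side is a sum of nonnegative terms). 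That said, to make this fully rigorous I would want to confirm that $c_{ij} \le 1$ for every $i$ and $j$, which holds because $C$ is the cumulative version of the doubly stochastic matrix~$T$, so $c_{ij} = \sum_{\ell=1}^{j} t_{i\ell} \le \sum_{\ell=1}^{n} t_{i\ell} = 1$.

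First I would state that, by Lemma~\ref{lem:BlockStructure}(1), the number of rows in an unfinished block~$B$ at column~$j$ equals $k+1$ where $k = \sum_{i\in B} c_{ij}$ is the value of~$B$. Then I would simply rearrange:
\[
   c_{bj} = k - \sum_{i\in B\setminus\{b\}} c_{ij} = (|B|-1) - \sum_{i\in B\setminus\{b\}} c_{ij} = \sum_{i\in B\setminus\{b\}} 1 - \sum_{i\in B\setminus\{b\}} c_{ij} = \sum_{i\in B\setminus\{b\}} (1 - c_{ij}),
\]
using $|B| - 1 = |B\setminus\{b\}|$. This completes the argument.

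The only subtlety worth flagging — and the place where I would be most careful — is that the identity as stated does not by itself assert nonnegativity of $c_{bj}$; it is purely an equation between the cumulative value of the top row of the block and the ``deficits'' $1 - c_{ij}$ of the other rows. Since we will later want $c_{bj}$ to be small (this lemma is invoked in the proof of Lemma~\ref{lemma:ErrorPrefix} to control the rounding error), the useful content is precisely that $c_{bj}$ equals this sum of row-deficits. I do not anticipate a genuine obstacle here: the lemma is a direct corollary of the block-size count in Lemma~\ref{lem:BlockStructure}(1), and the proof is a one-line rearrangement once that count is in hand.
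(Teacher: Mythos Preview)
Your proposal is correct and is essentially identical to the paper's proof: both invoke property~1 of Lemma~\ref{lem:BlockStructure} to get $|B|=k+1$ with $k=\sum_{i\in B}c_{ij}$, and then perform the same one-line rearrangement. The initial mention of induction on~$j$ is unnecessary (and you drop it yourself), as is the digression about $c_{ij}\le 1$, since the lemma asserts only an equality; but the substance matches the paper exactly.
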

\begin{proof}
Let the value of the unfinished block~$B$ be~$k=\sum_{i \in B} c_{ij}$.
By property 1 of Lemma \ref{lem:BlockStructure}, block~$B$ consists of $k+1$ rows.
 Thus, we have
\[
c_{bj}  =  k - \sum_{i \in B\setminus\{b\}} c_{ij}
 =  \sum_{i \in B\setminus\{b\}} (1 - c_{ij}).\qedhere
\]
\end{proof}

\begin{proof}[Proof of Lemma~\ref{lemma:ErrorPrefix}]
\newcommand{\error}{\text{er}}
Let us consider the sets of finished and unfinished blocks in $G_{{k}}$, $\mathcal{B}_F$ and $\mathcal{B}_U$, respectively.
For a block~$B\in\mathcal{B}_F\cup\mathcal{B}_U$, we denote by
\[ 
   \error_{{k}}(B) = \sum_{i\in B}\sum_{j=1}^{{k}} x_i(r_{ij}-t_{ij})
\]
its rounding error. Since each row is contained in exactly one block of~$G_{{k}}$,
\begin{equation}
    \sum_{j=1}^{{k}} (r_j-t_j) = \sum_{j=1}^{{k}} \sum_{i=1}^n x_i (r_{ij}-t_{ij})
    =  \sum_{i=1}^n \sum_{j=1}^{{k}} x_i(r_{ij}-t_{ij})
    = \sum_{B\in\mathcal{B}_F\cup\mathcal{B}_U} \error_{{k}}(B).\label{eqn:Error1}
\end{equation}
Hence, in order to prove the lemma, it suffices to bound the rounding errors of the blocks.

If block~$B$ is finished in~$G_{{k}}$, then all rows that belong to~$B$ are finished in~$T$ and in~$R$ (due to property~2 of Lemma~\ref{lem:round}) at column~${k}$. Hence,
\begin{equation}
   \error_{{k}}(B) = \sum_{i\in B}\sum_{j=1}^{{k}} x_i(r_{ij}-t_{ij}) 
                    = \sum_{i\in B}x_i\cdot \Bigg(\sum_{j=1}^{{k}} r_{ij}-\sum_{j=1}^{{k}}t_{ij}\Bigg) 
                    = \sum_{i\in B}x_i\cdot (1-1) 
                    = 0.\label{eqn:Error2}
\end{equation}

Now consider an unfinished block~$B$ in~$G_{{k}}$, and let~$a$ and~$b$ denote the smallest and largest index in~$B$, respectively.
By Lemma \ref{lem:round}, all rows in the block except for~$b$ are finished in $R$ at column~${k}$ (i.e., $\sum_{j=1}^{{k}}r_{ij}=1$ for~$i\in B\setminus\{b\}$ and $\sum_{j=1}^{{k}}r_{bj}=0$).
The rounding error of~$B$ can thus be bounded as follows (remember that~$c_{i{k}}=\sum_{j=1}^{{k}}t_{ij}$):
\begin{align}
  \error_{{k}}(B) & = \sum_{i\in B}\sum_{j=1}^{{k}} x_i(r_{ij}-t_{ij}) 
                     = \sum_{i\in B} x_i\sum_{j=1}^{{k}}r_{ij} - \sum_{i\in B} x_i \sum_{j=1}^{{k}}t_{ij}\notag\\
                   & = \sum_{i\in B\setminus\{b\}} x_i - \sum_{i\in B} x_i c_{i{k}}
                     = \sum_{i\in B\setminus\{b\}} x_i(1-c_{i{k}}) - x_b c_{b{k}}\notag\\
                   & = \sum_{i\in B\setminus\{b\}} x_i(1-c_{i{k}}) - x_b \sum_{i\in B\setminus\{b\}}(1-c_{i{k}})\label{eq:usage1}\\
                   & = \sum_{i\in B\setminus\{b\}} (x_i-x_b)(1-c_{i{k}})\notag\\                   
                   & \le (x_a-x_b)\sum_{i\in B\setminus\{b\}} (1-c_{i{k}})\label{eq:usage4}\\
                   & = (x_a-x_b)\cdot c_{bj}\label{eq:usage3}\\
                   & \le x_a-x_b.\label{eq:usage2}
\end{align}
Equations~\eqref{eq:usage1} and~\eqref{eq:usage3} follow from Lemma \ref{lem:excess}.
Inequality \eqref{eq:usage2} follows from the fact that $c_{bj} \le 1$. Inequality~\eqref{eq:usage4} follows from 
the facts that $1-c_{i{k}}\ge 0$ and~$x_i-x_b\ge 0$ for all~$i\in B$. These facts also imply that~$\error_{{k}}(B)\ge 0$.
Hence,
\begin{equation}
   \error_{{k}}(B) \in [0,x_a-x_b].\label{eqn:Error3}
\end{equation}

Together~\eqref{eqn:Error1} and~\eqref{eqn:Error2} imply that
\begin{align}\label{eqn:TotalError}
    \sum_{j=1}^{{k}} (r_j-t_j) & = \sum_{B\in\mathcal{B}_F\cup\mathcal{B}_U} \error_{{k}}(B)
    = \sum_{B\in\mathcal{B}_F} \error_{{k}}(B) + \sum_{B\in\mathcal{B}_U} \error_{{k}}(B)
    =  \sum_{B\in\mathcal{B}_U} \error_{{k}}(B).
\end{align}

Now, let $B_1, \dots B_h$ denote the unfinished blocks in $G_{{k}}$,
and for each block $B_f$ in $\mathcal{B}_U$, let $a_f$ and $b_f$ denote the minimum and
maximum indices, respectively, contained in the block. Property~3 of Lemma~\ref{lem:BlockStructure} implies
that the intervals~$[a_f,b_f]$ are pairwise disjoint. Hence,
\eqref{eqn:Error3} implies that
\[
   \sum_{B\in\mathcal{B}_U} \error_{{k}}(B) \in \Bigg[ 0, \sum_{f=1}^{h}(x_{a_f}-x_{b_f})\Bigg] \subseteq \Big[0,x_{1}-x_{n}\Big] \subseteq [0,\mu_x]. 
\]
Together with~\eqref{eqn:TotalError}, this implies the lemma.
\end{proof}

Now we are ready to prove Theorem~\ref{thm:LPRounding}.

\begin{proof}[Proof of Theorem~\ref{thm:LPRounding}]
Let~$(T,\alpha,\beta)$ denote an optimal solution to the linear program.
By definition, our rounding method produces a permutation matrix~$R$. Lemma~\ref{lemma:ErrorPrefix} implies that~$(R,\alpha,\beta+\mu_{y})$ is also a feasible solution to the linear program because for each~$k\in\{1,\ldots,n\}$,
\[
  \sum_{j=1}^{k}\sum_{i=1}^n r_{ij}\cdot x_i - \sum_{j=1}^{k-1} y_j
= \sum_{j=1}^{k}r_j - \sum_{j=1}^{k-1} y_j
\le \sum_{j=1}^{k}t_j - \sum_{j=1}^{k-1} y_j  + \mu_x
\le \beta + \mu_x
\]
and
\[
  \sum_{j=1}^{k}\sum_{i=1}^n r_{ij}\cdot x_i - \sum_{j=1}^k y_j
= \sum_{j=1}^{k}r_j - \sum_{j=1}^k y_j
\ge \sum_{j=1}^{k}t_j - \sum_{j=1}^k y_j
\ge \alpha.
\]

Now the theorem follows because~$\text{OPT}\ge \mu_x$ and~$\text{OPT}\ge \beta-\alpha$.
\end{proof}

Finally, we note that the additive integrality gap of the linear program in
Section \ref{section:GasolineProblem} can be arbitrarily close to
$\mu_x$.  Consider the following instance:
\begin{eqnarray*}
y = \frac{(n-1) + \mu}{n}, \quad Y ~=~ \{\underbrace{y, \dots,
  y}_{n \text{ entries}}\}, \quad X ~ =~ \{\mu, \underbrace{1, 1, \dots,
  1}_{n-1 \text{ entries }}\}.
\end{eqnarray*}
Then the value of the linear program is $y$.  However, the optimal
value is $\mu$, which can be arbitrarily larger than $y$.

\subsection{LP Rounding for the Slated Stock Size Problem}

We show that Theorem~\ref{thm:LPRounding} can also be applied to the
slated stock size problem, defined in
Section~\ref{subsec:SlatedSSProblem}.
Let $X =\{x_1\ge\ldots \ge x_{n_x}\}$ and $Y = \{y_1 \ge \ldots \ge
y_{n_y}\}$ be an input for the slated stock size problem, and
let~$\mu_x=x_1$, $\mu_y=y_1$.  Recall that in this problem,
arbitrary disjoint subsets of $n_x$ and $n_y$ slots are slated for $x$- and
$y$-jobs, respectively. Remember that $I_x$ and $I_y$ denote the indices of the
$x$- and $y$-slots, respectively. 
Let $\eta$ denote the optimal value for
the relaxation of the following integer program:
\begin{align}
 &\min \beta-\alpha \notag\\
&\forall j \in I_x: \sum_{i=1}^{n_x} z_{ij}^x  = 1, \quad
\forall i \in [1,n_x]: \sum_{j\in I_x} z_{ij}^x = 1, \quad \forall i,j \in I_x: z_{ij}^x  \in \{0,1\},\notag\\
&\forall j \in I_y: \sum_{i=1}^{n_y} z_{ij}^y  = 1, \quad
\forall i \in [1,n_y]: \sum_{j\in I_y} z_{ij}^y = 1, \quad \forall i,j \in I_y: z_{ij}^y  \in \{0,1\},\notag\\
&\forall \text{~prefix~$P$} : \quad 
\alpha ~\leq~ \sum_{j\in P\cap I_x}\sum_{i=1}^{n_x} z_{ij}^x\cdot  x_i -
\sum_{j\in P\cap I_y}\sum_{i=1}^{n_y} z_{ij}^y\cdot  y_i 
 ~\leq~ \beta. \notag
\end{align}

Now let~$z_{ij}^x$ and~$z_{ij}^y$ denote an optimal solution to the
relaxation of the previous integer program.  Consider the generalized
gasoline problem with input~$\tilde{x}_{j} = \sum_{i=1}^{n_x}
z_{ij}^x\cdot x_i$ for each $x$-slot~$j\in I_x$
and~$y_1,\ldots,y_{n_y}$ to be assigned to the $y$-slots. That is, in
the constructed problem instance of the generalized gasoline problem,
the $x$-values are fixed for each slot while the $y$-values are to be
permuted.  The optimal fractional solution for this instance still has
value $\eta$.  Hence, Theorem~\ref{thm:LPRounding} implies that we
obtain a permutation~$\pi$ of the items~$y_1,\ldots,y_{n_y}$ with
value at most~$\eta+\mu_y$. Now we change the roles of~$x$ and~$y$ and
consider the generalized gasoline problem with
input~$y_{\pi(1)},\ldots,y_{\pi(n_y)}$ (these are the fixed items in
the $y$-slots) and~$x_j$ for~$j\in I_x$ (these items are to be
permuted). The optimal fractional solution for this instance has value
at most~$\eta+\mu_y$.  Hence, Theorem~\ref{thm:LPRounding} implies
that we obtain a permutation~$\sigma$ of the items~$x_j$ with value at
most~$\eta+\mu_y+\mu_x$. The permutations~$\pi$ and~$\sigma$ together
form a solution for the slated stock size problem with value at
most~$\eta+\mu_y+\mu_x\le 3~OPT$.


\section{Conclusions}

We have introduced two new variants of the stock size problem and have
presented nontrivial approximation algorithms for them. The most
intriguing question for our variants as well as for the original stock
size problem is if the approximation guarantees can be improved. Each
of these problems is NP-hard but no APX-hardness is known. So it is
conceivable that there exists a PTAS. Closing this gap seems very
challenging.

\subsubsection*{Acknowledgments.} 
We would like to thank Anupam Gupta for
several useful examples and enlightening discussions.
We would like to thank Jochen K\"onemann for pointing out a connection
between the alternating stock size problem and the optimization
version of the gasoline puzzle.  We would also like to thank Tam\'as
Kis for pointing out related papers on scheduling problems.

Part of this work was done during the trimester program on
  Combinatorial Optimization at the Hausdorff Institute for
  Mathematics in Bonn and we would like to thank HIM for their
  organization and hospitality during the program.
\bibliography{stockSize}


\appendix

\section{NP-hardness} \label{sec:np}

The goal of this section is to prove the following theorem.  
\begin{theorem}\label{np-hard}
	The alternating stock size problem is NP-hard.
\end{theorem}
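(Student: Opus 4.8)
I plan to prove NP-hardness by a polynomial-time reduction from \textsc{3-Partition}, which is NP-complete even when restricted to instances of the following form: positive integers $a_1,\dots,a_{3k}$ with $\sum_{i=1}^{3k} a_i = kB$ and $B/4 < a_i < B/2$ for all $i$; by scaling all numbers by a factor of $4$ we may additionally assume that $4$ divides $B$ (so in particular $B\ge 8$). The question is whether the $a_i$ can be split into $k$ triples each of sum exactly $B$. Given such an instance, I form the alternating stock size instance
\[
   X = \{a_1,\dots,a_{3k}\}, \qquad
   Y = \{\,\underbrace{B-2,\dots,B-2}_{k}\,\}\cup\{\,\underbrace{1,\dots,1}_{2k}\,\},
\]
which is legal because $|X| = |Y| = 3k$ and $\sum X = \sum Y = kB$, and which is obviously computable in polynomial time. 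It helps to view an alternating ordering as a lattice walk that begins at $0$, alternately adds the next $x$-value and subtracts the next $y$-value, and ends at $0$: constraint~(i) says the walk never goes below $0$, and the objective~(ii) is precisely the maximum height of the walk (every record high is attained right after an ``add'' step). I will show that the \textsc{3-Partition} instance is a ``yes'' instance if and only if $X,Y$ admit an alternating ordering of objective value at most $K := B-2$.

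Forward direction. Given a partition into triples of sum $B$, for each triple $\{a_{i_1},a_{i_2},a_{i_3}\}$ output the block $a_{i_1},\,1,\,a_{i_2},\,1,\,a_{i_3},\,(B-2)$, and concatenate the $k$ blocks. Inside a block the walk attains its maximum $a_{i_1}+a_{i_2}+a_{i_3}-2 = B-2 = K$ (the earlier partial sums are smaller, using $a_{i_j} < B/2$ and $B\ge 8$) and then returns to $0$; so the walk stays in $[0,K]$ and the objective equals $K$.

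Converse direction. Suppose an alternating ordering keeps the walk inside $[0,K]$. The crucial observation is that just before a copy of the ``big'' value $B-2$ is subtracted, the walk is at a record high, hence at height $\le K = B-2$, while staying non-negative after the subtraction forces the height to be $\ge B-2$; so the walk sits at height exactly $B-2$ immediately before, and at height $0$ immediately after, each of the $k$ big steps. Consequently these big steps cut the ordering into $k$ consecutive blocks, each beginning and ending at height $0$; a block that uses $g$ of the $x$-values uses $g-1$ of the $1$'s and exactly one copy of $B-2$ (no big step may occur strictly inside a block), so its $x$-values sum to $(B-2)+(g-1) = B-3+g$. Since each of these $g$ integers lies strictly between $B/4$ and $B/2$ and $B$ is a multiple of $4$, we obtain $g\,(B/4+1) \le B-3+g \le g\,(B/2-1)$, which is possible only for $g=3$. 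Hence every block is a triple summing to $B-3+3 = B$, and these $k$ triples form a valid \textsc{3-Partition}. This establishes the equivalence, and the theorem follows.

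I expect the heart of the argument to be the converse direction, and specifically the claim that the big $y$-values partition the ordering into blocks containing exactly three $x$-values each. The alternation constraint forces ``filler'' $1$'s to be interspersed among the $x$-values, and each such filler shifts a block's $x$-sum; the reduction works precisely because the big value is chosen to be $B-2$, so that a block with the intended two fillers sums to exactly $B$, while the near-tightness of the bounds $B/4 < a_i < B/2$ together with $4\mid B$ rules out blocks with any other number of $x$-values. A few routine boundary checks are also needed (the first $y$-value of the ordering cannot be big, and no part of the ordering is left over after the last big step), but these are immediate from the same height bound.
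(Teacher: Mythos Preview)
Your proof is correct and follows the same overall strategy as the paper: reduce from \textsc{3-Partition}, place ``big'' values in $Y$, and use the fact that subtracting a big value forces the walk to sit at height exactly $K$ just before and $0$ just after, thereby cutting the ordering into blocks whose $x$-contents must form a valid 3-partition.

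The specific construction differs. The paper takes $X$ to be all $1$'s (with $|X|=4k$) and encodes the 3-partition numbers in $Y$ as $1-z_i$ together with $k$ copies of the big value $2$; you instead put the 3-partition numbers directly into $X$ (so $|X|=3k$) and let $Y$ consist of $k$ copies of $B-2$ and $2k$ copies of $1$. Both work by the same ``height-pinning'' argument. One advantage of the paper's version is that, because all $x$-values are equal, the same reduction immediately proves NP-hardness of the gasoline problem (one may fix the $x$'s in advance); your construction does not give this for free. On the other hand, your construction stays within integers throughout, which matches the problem statement more cleanly.

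One small wrinkle: your inequality $g(B/4+1)\le B-3+g\le g(B/2-1)$ forces $g=3$ only when $B\ge 12$; for $B=8$ no integer $g$ satisfies it. This does not harm the reduction (for $B=8$ every $a_i$ equals $3$, so the 3-partition side is trivially ``no'' as well), but it would be tidier to assume $B\ge 12$ up front, which is free since \textsc{3-Partition} remains NP-hard under that restriction.
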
  

We give a reduction from the 3-partition problem, which is defined as follows.

  \indent\indent {\bf Input:} $ Z = \{ z_1, ..., z_n\}$ where $n = 3k$; $z_i \in (\frac{1}{4},\frac{1}{2})$, $\forall i \in \{1,...,n\}$, and $\sum_{i=1}^n z_i = k$.
  
  \indent\indent {\bf Question:} Can we divide the input set $Z$ into sets of three elements each such that each set has value exactly 1?

\begin{proof}[Proof of Theorem~\ref{np-hard}]

We will reduce 3-partition to the alternating stock size problem.
Consider an instance $Z$ of $3$-partition.  We use $Z$ to create the
following instance $I$ of the alternating stock size problem.
    
  \indent\indent {\bf Input:} $X=\{x_1,...,x_{n+k}\}$ where $x_i=1, ~\forall i \in \{1,...,n+k\}$ and $Y = \{y_1,...,y_{n+k}\}$ where $y_i = 1 - z_i,~ \forall i \in \{1,...,n\} $ and $y_i = 2,~ \forall i \in \{n+1,...,n+k\}$.
  
  \indent\indent {\bf Question:} Is there a solution for the alternating stock size problem where the maximum stock size is at most two?

  \vspace{1mm}
  
  We will show that there is a feasible 3-partition for $Z$ iff there is a solution for the alternating stock size problem where the maximum stock size is at most two.

  Suppose that there is a 3-partition for $Z$. We want to show that we can find an alternating sequence for $I$ with a stock size of at most 2.  We can assume that each set of the partition contains $\{z_{3i+1},z_{3i+2},z_{3i+3}\}$ with $ i \in \{0,...,k-1\}$ (if not, change the indices of the elements of $Z$). 
  We have $z_{1} + z_{2} + z_{3} = 1$ so the sequence $1,-y_1,1,-y_2,1,-y_3,1,-2$ never exceeds two and at the end of the sequence the sum is back at zero. So we repeat this process to the other sets of the 3-partition.
  
\begin{figure}[ht!]
  \centering
  \scalebox{.4}{
\ifx\du\undefined
  \newlength{\du}
\fi
\setlength{\du}{15\unitlength}
\begin{tikzpicture}
\pgftransformxscale{1.000000}
\pgftransformyscale{-1.000000}
\definecolor{dialinecolor}{rgb}{0.000000, 0.000000, 0.000000}
\pgfsetstrokecolor{dialinecolor}
\definecolor{dialinecolor}{rgb}{1.000000, 1.000000, 1.000000}
\pgfsetfillcolor{dialinecolor}
\definecolor{dialinecolor}{rgb}{0.921569, 0.000000, 0.000000}
\pgfsetstrokecolor{dialinecolor}
\node[anchor=west] at (-32.691000\du,-55.524700\du){};
\pgfsetlinewidth{0.150000\du}
\pgfsetdash{}{0pt}
\pgfsetdash{}{0pt}
\pgfsetbuttcap
{
\definecolor{dialinecolor}{rgb}{0.000000, 0.000000, 0.000000}
\pgfsetfillcolor{dialinecolor}
\definecolor{dialinecolor}{rgb}{0.000000, 0.000000, 0.000000}
\pgfsetstrokecolor{dialinecolor}
\draw (-45.000000\du,-50.000000\du)--(-45.000000\du,-50.000000\du);
}
\pgfsetlinewidth{0.200000\du}
\pgfsetdash{}{0pt}
\pgfsetdash{}{0pt}
\pgfsetbuttcap
{
\definecolor{dialinecolor}{rgb}{0.000000, 0.000000, 0.000000}
\pgfsetfillcolor{dialinecolor}
\pgfsetarrowsend{to}
\definecolor{dialinecolor}{rgb}{0.000000, 0.000000, 0.000000}
\pgfsetstrokecolor{dialinecolor}
\draw (-45.870300\du,-50.000000\du)--(-18.000000\du,-50.000000\du);
}
\pgfsetlinewidth{0.200000\du}
\pgfsetdash{}{0pt}
\pgfsetdash{}{0pt}
\pgfsetbuttcap
{
\definecolor{dialinecolor}{rgb}{0.000000, 0.000000, 0.000000}
\pgfsetfillcolor{dialinecolor}
\pgfsetarrowsend{to}
\definecolor{dialinecolor}{rgb}{0.000000, 0.000000, 0.000000}
\pgfsetstrokecolor{dialinecolor}
\draw (-45.000000\du,-49.000000\du)--(-45.000000\du,-68.000000\du);
}
\pgfsetlinewidth{0.200000\du}
\pgfsetdash{}{0pt}
\pgfsetdash{}{0pt}
\pgfsetroundjoin
\pgfsetbuttcap
{
\definecolor{dialinecolor}{rgb}{0.074510, 0.023529, 0.843137}
\pgfsetfillcolor{dialinecolor}
{\pgfsetcornersarced{\pgfpoint{0.010000\du}{0.010000\du}}\definecolor{dialinecolor}{rgb}{0.074510, 0.023529, 0.843137}
\pgfsetstrokecolor{dialinecolor}
\draw (-45.000000\du,-50.000000\du)--(-41.000000\du,-58.000000\du)--(-39.000000\du,-53.000000\du)--(-36.000000\du,-61.000000\du)--(-33.000000\du,-55.000000\du)--(-30.000000\du,-63.000000\du)--(-27.000000\du,-57.000000\du)--(-24.000000\du,-65.000000\du)--(-21.000000\du,-50.000000\du);
}}
\pgfsetlinewidth{0.150000\du}
\pgfsetdash{{1.000000\du}{1.000000\du}}{0\du}
\pgfsetdash{{1.000000\du}{1.000000\du}}{0\du}
\pgfsetbuttcap
{
\definecolor{dialinecolor}{rgb}{1.000000, 0.027451, 0.027451}
\pgfsetfillcolor{dialinecolor}
}
\definecolor{dialinecolor}{rgb}{1.000000, 0.027451, 0.027451}
\pgfsetstrokecolor{dialinecolor}
\draw (-39.000000\du,-52.332295\du)--(-39.000000\du,-50.667705\du);
\pgfsetlinewidth{0.150000\du}
\pgfsetdash{}{0pt}
\pgfsetmiterjoin
\definecolor{dialinecolor}{rgb}{1.000000, 0.027451, 0.027451}
\pgfsetstrokecolor{dialinecolor}
\draw (-38.750000\du,-52.332295\du)--(-39.000000\du,-52.832295\du)--(-39.250000\du,-52.332295\du)--cycle;
\pgfsetlinewidth{0.150000\du}
\pgfsetdash{}{0pt}
\pgfsetmiterjoin
\definecolor{dialinecolor}{rgb}{1.000000, 0.027451, 0.027451}
\pgfsetstrokecolor{dialinecolor}
\draw (-39.250000\du,-50.667705\du)--(-39.000000\du,-50.167705\du)--(-38.750000\du,-50.667705\du)--cycle;
\pgfsetlinewidth{0.150000\du}
\pgfsetdash{{1.000000\du}{1.000000\du}}{0\du}
\pgfsetdash{{1.000000\du}{1.000000\du}}{0\du}
\pgfsetbuttcap
{
\definecolor{dialinecolor}{rgb}{1.000000, 0.027451, 0.027451}
\pgfsetfillcolor{dialinecolor}
}
\definecolor{dialinecolor}{rgb}{1.000000, 0.027451, 0.027451}
\pgfsetstrokecolor{dialinecolor}
\draw (-33.000000\du,-53.667705\du)--(-33.000000\du,-54.332295\du);
\pgfsetlinewidth{0.150000\du}
\pgfsetdash{}{0pt}
\pgfsetmiterjoin
\definecolor{dialinecolor}{rgb}{1.000000, 0.027451, 0.027451}
\pgfsetstrokecolor{dialinecolor}
\draw (-33.250000\du,-53.667705\du)--(-33.000000\du,-53.167705\du)--(-32.750000\du,-53.667705\du)--cycle;
\pgfsetlinewidth{0.150000\du}
\pgfsetdash{}{0pt}
\pgfsetmiterjoin
\definecolor{dialinecolor}{rgb}{1.000000, 0.027451, 0.027451}
\pgfsetstrokecolor{dialinecolor}
\draw (-32.750000\du,-54.332295\du)--(-33.000000\du,-54.832295\du)--(-33.250000\du,-54.332295\du)--cycle;
\pgfsetlinewidth{0.150000\du}
\pgfsetdash{{1.000000\du}{1.000000\du}}{0\du}
\pgfsetdash{{1.000000\du}{1.000000\du}}{0\du}
\pgfsetbuttcap
{
\definecolor{dialinecolor}{rgb}{1.000000, 0.027451, 0.027451}
\pgfsetfillcolor{dialinecolor}
}
\definecolor{dialinecolor}{rgb}{1.000000, 0.027451, 0.027451}
\pgfsetstrokecolor{dialinecolor}
\draw (-27.000000\du,-56.332295\du)--(-27.000000\du,-55.667705\du);
\pgfsetlinewidth{0.150000\du}
\pgfsetdash{}{0pt}
\pgfsetmiterjoin
\definecolor{dialinecolor}{rgb}{1.000000, 0.027451, 0.027451}
\pgfsetstrokecolor{dialinecolor}
\draw (-26.750000\du,-56.332295\du)--(-27.000000\du,-56.832295\du)--(-27.250000\du,-56.332295\du)--cycle;
\pgfsetlinewidth{0.150000\du}
\pgfsetdash{}{0pt}
\pgfsetmiterjoin
\definecolor{dialinecolor}{rgb}{1.000000, 0.027451, 0.027451}
\pgfsetstrokecolor{dialinecolor}
\draw (-27.250000\du,-55.667705\du)--(-27.000000\du,-55.167705\du)--(-26.750000\du,-55.667705\du)--cycle;
\definecolor{dialinecolor}{rgb}{0.000000, 0.000000, 0.000000}
\pgfsetstrokecolor{dialinecolor}
\node[anchor=west] at (-38.494964\du,-51.469682\du){\scalebox{2}{\textcolor{red}{$z_1$}}};
\definecolor{dialinecolor}{rgb}{0.000000, 0.000000, 0.000000}
\pgfsetstrokecolor{dialinecolor}
\node[anchor=west] at (-32.497619\du,-53.807156\du){\scalebox{2}{\textcolor{red}{$z_2$}}};
\definecolor{dialinecolor}{rgb}{0.000000, 0.000000, 0.000000}
\pgfsetstrokecolor{dialinecolor}
\node[anchor=west] at (-26.487345\du,-55.751656\du){\scalebox{2}{\textcolor{red}{$z_3$}}};
\pgfsetlinewidth{0.150000\du}
\pgfsetdash{{1.000000\du}{1.000000\du}}{0\du}
\pgfsetdash{{1.000000\du}{1.000000\du}}{0\du}
\pgfsetbuttcap
{
\definecolor{dialinecolor}{rgb}{1.000000, 0.027451, 0.027451}
\pgfsetfillcolor{dialinecolor}
\definecolor{dialinecolor}{rgb}{1.000000, 0.027451, 0.027451}
\pgfsetstrokecolor{dialinecolor}
\draw (-46.000000\du,-65.000000\du)--(-21.000000\du,-65.000000\du);
}
\definecolor{dialinecolor}{rgb}{1.000000, 0.027451, 0.027451}
\pgfsetstrokecolor{dialinecolor}
\node[anchor=west] at (-47.300000\du,-65.0006\du){\scalebox{2}{\textcolor{red}{2}}};
\pgfsetlinewidth{0.130000\du}
\pgfsetdash{{\pgflinewidth}{0.200000\du}}{0cm}
\pgfsetdash{{\pgflinewidth}{0.200000\du}}{0cm}
\pgfsetbuttcap
{
\definecolor{dialinecolor}{rgb}{1.000000, 0.027451, 0.027451}
\pgfsetfillcolor{dialinecolor}
\definecolor{dialinecolor}{rgb}{1.000000, 0.027451, 0.027451}
\pgfsetstrokecolor{dialinecolor}
\draw (-33.500000\du,-55.000000\du)--(-26.500000\du,-55.000000\du);
}
\pgfsetlinewidth{0.130000\du}
\pgfsetdash{{\pgflinewidth}{0.200000\du}}{0cm}
\pgfsetdash{{\pgflinewidth}{0.200000\du}}{0cm}
\pgfsetbuttcap
{
\definecolor{dialinecolor}{rgb}{1.000000, 0.027451, 0.027451}
\pgfsetfillcolor{dialinecolor}
\definecolor{dialinecolor}{rgb}{1.000000, 0.027451, 0.027451}
\pgfsetstrokecolor{dialinecolor}
\draw (-39.500000\du,-53.000000\du)--(-32.500000\du,-53.000000\du);
}
\end{tikzpicture}}
  \caption{The sequence $1,y_1,1,y_2,1,y_3,1,2$}\label{np2}
\end{figure}
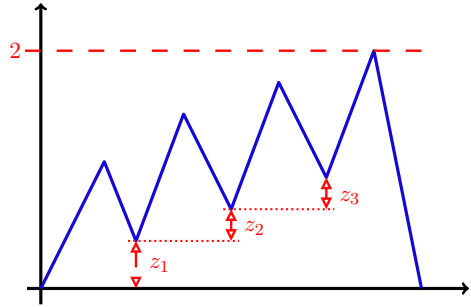

  Suppose that there is a solution for the alternating stock size problem such that $OPT(I) \leq 2$. We want to find a 3-partition for the instance $Z$. In $Y$ some elements are less than $1$ and can be written $1-z$ for $z \in Z$ and others are equal to $2$. The idea is to show that between any two consecutive $2$'s, there are always exactly three $y$'s whose values are less than $1$.  

Let $x_1^*,y_1^*,\dots,x_{n+k}^*,y_{n+k}^*$ be the sequence for the optimal solution, and let $y_j^*$ be the first element in the sequence that is a $2$. Then $\sum_{i=1}^jx_i^* -\sum_{i=1}^{j-1}y_i^* = 2$ because the optimal is at most two and the sum should not go below zero at the next step. Moreover $y_j^*$ is the first $2$ so: $\forall i \in \{1,...,j-1\},~ \exists z_i^* \in Z,~ y_j^* = 1-z_i^*$. Therefore $\sum_{i=1}^{j-1}z_i^* = 1$.  And $\forall i \in
  \{1,...,n\},~ z_i \in (\frac{1}{4},\frac{1}{2})$ so $j=4$ and $z_{1}^* + z_{2}^* + z_{3}^* = 1$.  Thus, we have packed three elements of $Z$ and after $y_j^*$ the sum is back to zero, so we can repeat the process on the remaining elements.
  
\end{proof}

From this proof, we can see that the Gasoline Problem is also NP-hard.
This follows from the fact that, in
the reduction for the alternating stock size problem,
all of the $x$-values are set equal to one.
Thus, they can simply be fixed in advance.  Then, the only decisions
required in the problem produced in the reduction involve placing the
$y$-values.  More specifically, one can see that the NP-hardness proof
also shows that the problem of placing the $y$-values so as to
minimize the difference between computing the highest point and
the lowest point is NP-hard.


\section{Proof of Lemma~\ref{thm:batch_thm}}\label{appendix2}


\begin{proof}[Proof of Lemma~\ref{thm:batch_thm}]
For a pair $B = \{x,y\} \in X \times Y$, let $x(B)$ and $y(B)$ denote
the values of the $x$- and $y$-jobs, respectively.  We will sometimes
refer to a pair $\{x,y\}$ as positive or negative which describes
the value of $x-y$.

Partition the pairs into two sets ${\mathscr B}^+$
and ${\mathscr B}^-$, where the first set contains all of the
positive pairs, and the second
set contains all of the negative pairs.
(We can assume there are no pairs for which $x-y=0$, since these can
simply be sequenced first.)
Begin with an empty list $L^S$ and with the current stock size $S$ set
to zero.  
We
then repeat the following step until the sets ${\mathscr B}^+$ and
${\mathscr B}^-$ are both empty:

``Find any pair $\{x,y\}$ in ${\mathscr B}^-$ such that $S + x - y
\geq 0$.  If no such pair exists, choose a pair $\{x,y\}$ from
${\mathscr B}^+$.  Set $S := S + x - y$.  Append $x$ and then $y$ to $L^S$
and remove the pair from ${\mathscr B}^-$ or from ${\mathscr B}^+$.''


Since the sum of all the pairs is zero, and the stock size never goes
below zero, each time a positive pair is appended to the list, there
exists at least one negative pair in ${\mathscr B}^-$.  To prove the
upper bound $(1+q)T$ on the maximum stock size, we will introduce
so-called {\em breakpoints} which fulfill the following two
conditions: (a) at each breakpoint, the current stock size $S$ is less
than $qT$; and (b) between any two consecutive breakpoints, $S$
remains below $(1+q)T$.

The first breakpoint is at time zero; the other breakpoints are the
time points just before a positive pair is sequenced.  The last
breakpoint is defined to be just after the last pair is sequenced.
The first breakpoint and the last one fulfill condition (a) by
definition.  If one of the other breakpoints would not fulfill
condition (a), then $S \geq qT$ must hold, and because of property
(ii) our algorithm would have chosen a negative pair to be sequenced.

Next we consider two consecutive breakpoints $BP_i$ and $BP_{i+1}$,
and we let $S_i < qT$ denote the stock size at time $BP_i$.  Since at
time $BP_i$, a positive pair is sequenced, for all negative pairs
$B^-$, the inequality
$$S_i + x(B^-) < T,$$ is true.  Otherwise, a negative pair $B^-$
could have been sequenced next.

After positive pair $B^+$ is appended to $L^S$, the current stock size
increases to $S_i + x(B^+) - y(B^+) \leq S_i + qT$.  Clearly $S_i +
x(B^+) \leq qT + T$.  Either the next pair is positive (and we are
again at a breakpoint) or there is a sequence of negative pairs.
The stock size during any of these pairs always remains below $S_i +
x(B^+) - y(B^+) + x(B^-) < T + qT$.  After each of these pairs, the
stock size does not increase.  This shows that condition (b) holds for
any two consecutive breakpoints and completes the proof.
\end{proof}

\end{document}